\documentclass[sigconf]{acmart}
\usepackage[utf8]{inputenc}
\usepackage{tikz}
\usepackage{algorithm}
\usepackage{algorithmic}
\usepackage{subcaption}
\usepackage{amsmath}
\usepackage{amsmath}
\usepackage{multirow}
\usepackage{tcolorbox}
\usetikzlibrary{shapes.geometric, arrows.meta, positioning, fit, backgrounds, calc}

\usepackage{enumitem}
\newcommand{\partitle}[1]{%
  \par\smallskip
  \noindent\textbf{#1.}
}

\AtBeginDocument{%
  }

\setcopyright{acmlicensed}
\copyrightyear{2018}
\acmYear{2018}
\acmDOI{XXXXXXX.XXXXXXX}
\acmConference[Conference acronym 'XX]{Make sure to enter the correct
  conference title from your rights confirmation email}{June 03--05,
  2018}{Woodstock, NY}
\acmISBN{978-1-4503-XXXX-X/2018/06}

\begin{document}

\title{PriDyG: Privacy-preserving Dynamic Graph Inference with LLM-GNN Collaboration}

\author{Yuyang Xia}
\affiliation{%
  \institution{Emory University}
  \city{Atlanta}
  \state{Georgia}
  \country{USA}
}
\email{yuyang.xia@emory.edu}

\author{Ruixuan Liu}
\affiliation{%
  \institution{Emory University}
  \city{Atlanta}
  \state{Georgia}
  \country{USA}
}
\email{ruixuan.liu2@emory.edu}

\author{Li Xiong}
\affiliation{%
  \institution{Emory University}
  \city{Atlanta}
  \state{Georgia}
  \country{USA}
}
\email{lxiong@emory.edu}

\begin{abstract}
Graph inference over relational data can expose sensitive edge information, and this risk becomes more severe in dynamic graphs, where repeated model updates cause privacy loss to accumulate. We formulate Edge-level Differentially Private Dynamic Graph Inference (EDG) and propose PriDyG, a private inference framework that combines GNN-based structural learning with LLM-based semantic reasoning. PriDyG introduces incremental private multi-hop aggregation, which buffers newly arrived edges and processes each edge exactly once. By parallel composition, the total privacy cost equals that of a single static release, independent of the number or schedule of model updates. Compared with geometrically decaying budget allocation, incremental aggregation avoids exponentially increasing noise while preserving exact one-hop signals and at least half of two-hop information transfers. PriDyG further complements privatized GNN outputs with LLM predictions derived solely from node text, incurring no additional edge-level privacy cost. Experiments on four benchmarks for node classification and link prediction show that PriDyG consistently outperforms geometrically decaying baselines under the same privacy budget and matches the utility of naive per-update retraining while reducing cumulative privacy cost by up to three orders of magnitude.

\end{abstract}

\maketitle

\section{Introduction}

Graphs support many real-world applications, such as social network analysis, citation analysis, and e-commerce recommendation. Graph inference tasks~\cite{vert2004supervised,xu2018powerful,zhang2023inferturbo}, such as node classification and link prediction, are fundamental to these applications. However, graph structure inherently encodes sensitive information: edges may represent social connections, financial transactions, or medical consultations. Answering inference queries over such graphs may therefore expose sensitive relationships encoded in their structure, even when individual node attributes are considered non-sensitive~\cite{fu2023privacy}.

Differential privacy (DP)~\cite{dwork2014algorithmic} offers a principled framework for protecting sensitive information. In the graph setting, \textit{edge-level} DP guarantees that the presence or absence of any single edge cannot be reliably inferred from a mechanism's output~\cite{raskhodnikova2016differentially}. Recent work has made significant progress on privacy-preserving graph learning. GAP~\cite{sajadmanesh2023gap} and ProGAP~\cite{sajadmanesh2024progap} introduce aggregation perturbation to achieve edge-level DP for Graph Neural Networks (GNNs), while other works~\cite{zhang2024dpar,chien2023differentially,ran2024differentially,wu2023privacy} explore alternative privacy notions and mechanisms, including node-level DP. 
However, this protection comes at a steep utility cost (\textbf{Challenge 1}): enforcing edge-level DP requires injecting calibrated noise into the neighborhood aggregation process, which distorts precisely the structural signal that GNNs rely on. Under practical privacy budgets, DP-GNNs suffer substantial accuracy degradation relative to their non-private counterparts.

This challenge is further compounded when the graph evolves (\textbf{Challenge 2}). In practice, graphs are rarely static; they evolve as new events and interactions  occur~\cite{zeng2025efficient,sekara2016fundamental,sun2022aligning,mitra2025analyzing}, and inference models must be updated to reflect the current graph state. Continual inference on an evolving private graph raises a distinct privacy-accounting difficulty: as new edges arrive and the pipeline is repeatedly updated, historical edges are accessed again at every update, so privacy loss accumulates over time. Recent efforts have explored privacy protection for dynamic graphs. For example, Raskhodnikova et al.~\cite{raskhodnikova2025fully} design private algorithms for traditional tasks, such as triangle counting, on fully dynamic graphs, and DyGAN-EDP~\cite{ma2026novel} studies edge-level DP dynamic graph publishing by modeling a dynamic graph as a sequence of snapshots and synthesizing privacy-preserving future snapshots with a generative adversarial framework. However, these methods target private statistics or synthetic graph release rather than continual private inference over an evolving graph, and none of them addresses the accumulation of privacy loss across model updates. To the best of our knowledge, this is the first work to formulate and study edge-level DP graph inference over continually updated graphs, with the goal of ensuring that privacy cost does not grow with the number of graph updates.

To address these two challenges, we propose \textbf{PriDyG}, a privacy-preserving dynamic graph inference framework that couples LLM-based semantic reasoning with dynamic DP-GNN structural learning, as illustrated in Figure~\ref{fig:overview}. PriDyG runs two complementary branches: an LLM branch that answers queries from node textual attributes alone, and a DP-GNN branch that learns from the graph structure under edge-level DP; their predictions are combined through a confidence-gated fusion mechanism.
The key insight behind the collaborative design is that edge-level DP protects only the graph \emph{structure}: node-level textual attributes fall outside the protected data under this threat model, so LLM predictions computed solely from them consume \emph{zero} privacy budget. This privacy-free semantic signal compensates for the utility loss incurred by the privatized GNN, directly narrowing the privacy--utility gap of Challenge 1.

\begin{figure}[ht]
    \centering
\includegraphics[width=0.95\linewidth]{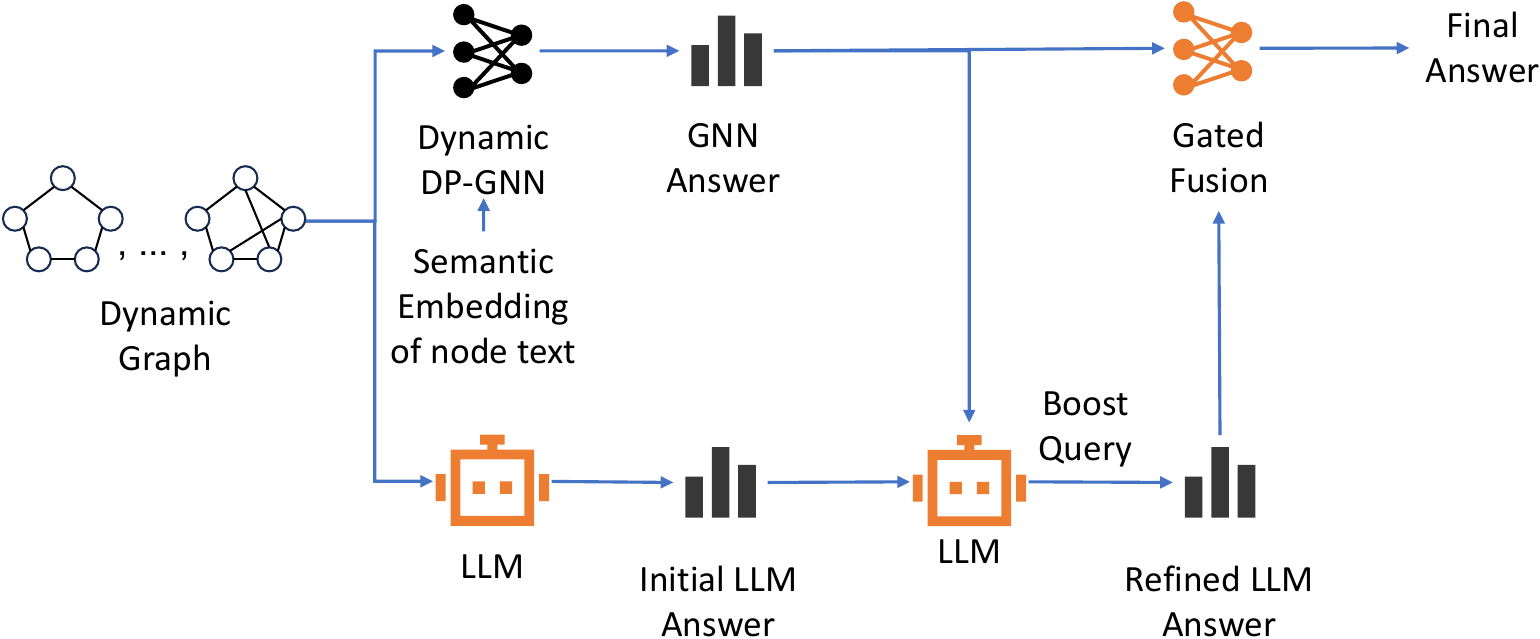}
    \caption{Overall architecture of PriDyG.}
    \label{fig:overview}
\end{figure}

For the dynamic DP-GNN branch, we build on aggregation-perturbation methods~\cite{sajadmanesh2023gap}, whose privacy cost is dominated by the Private Multi-hop Aggregation (PMA) step that perturbs the multi-hop neighborhood aggregation over the entire graph. Naively, every model update re-invokes PMA on the full graph; after $M$ updates, sequential composition yields a total cost of $M \cdot \epsilon_{\text{PMA}}$, which grows linearly and quickly becomes prohibitive. A seemingly viable alternative, following continual-release practice~\cite{feng2024dpi}, is to keep re-running full PMA while allocating a geometrically decaying budget to successive updates: the total cost then stays bounded, but the per-update budget decays exponentially toward zero, so the required noise scale grows exponentially and later models rapidly become unusable.

Our key observation is that the pre-normalization aggregation is \emph{linear} in the adjacency matrix, so newly inserted edges can be folded into the cached noisy aggregations as an additive delta---each edge's raw topology is accessed exactly once, and the \emph{parallel composition} theorem of DP bounds the total privacy cost by a \textbf{constant} $\epsilon_{\text{PMA}}$, independent of the number of updates $M$. The cost of this constant budget is paid in signal rather than privacy: the incremental aggregation only approximates a full retrain, omitting certain multi-hop propagation paths and accumulating noise across updates. We prove that this approximation is structured and bounded---one-hop aggregation is signal-exact, higher hops retain provable capture guarantees, and noise grows only \emph{polynomially}, in contrast to the exponential blow-up of geometric decay---resolving Challenge 2.
 
Importantly, the two components are coupled rather than independent fixes to separate problems. Node textual attributes remain fixed as the graph evolves, so the LLM branch provides a \emph{stable, privacy-free semantic anchor} throughout the dynamic process, whereas the accuracy of the incremental GNN branch gradually erodes as noise accumulates across updates. The confidence-gated fusion mediates between the two adaptively. As accumulating noise shrinks the GNN's prediction margins, more queries fall below the confidence threshold and are routed to fusion. We observe this in both node classification and link prediction tasks: over a long insertion stream the GNN branch alone causes substaintial accuracy drop relative to static training, whereas PriDyG's accuracy remains stable. Hence, the pipeline's reliance naturally shifts toward the stable LLM anchor over the course of the stream. The LLM-GNN collaboration is therefore most valuable precisely in the dynamic regime---where the structural branch is at its weakest---rather than being a generic accuracy boost bolted onto a dynamic mechanism.
 
\noindent Our main contributions are as follows:
\begin{itemize}[leftmargin=*]
    \item \textbf{LLM-GNN collaborative framework.} We design PriDyG, a unified pipeline that combines DP-GNN structural learning with privacy-free LLM semantic reasoning via confidence-gated fusion. The LLM branch compensates for the utility loss caused by privacy protection, thereby narrowing the privacy--utility gap.
 
    \item \textbf{Constant-budget dynamic graph inference.} We propose incremental PMA, which exploits the disjointness of inserted edge batches to achieve $O(1)$ total privacy cost under arbitrarily many model updates. This fundamentally resolves the privacy budget tension in dynamic graph inference, allowing frequent model updates without additional privacy degradation.
 
    \item \textbf{Comprehensive evaluation.} We evaluate PriDyG on four benchmark datasets (Cora, PubMed, ogbn-arxiv, ogbn-products) for node classification and link prediction. Under graph updates it is the most accurate private method on all four datasets while maintaining formal edge-level DP guarantee, and remains competitive with non-private baselines.
\end{itemize}

\section{Preliminaries}

We introduce the key concepts underlying our work and then formally define the problem. Table \ref{tab:notations.} summarizes important notations used in our paper.

\begin{table}[ht]
    \centering
    \scriptsize
    \begin{tabular}{ll}
        Notations & Definitions \\
        \hline
        $\mathcal{E}$ & Edge set\\
        $\mathbf{X}$ & Node feature \\
        $\mathbf{T}$ & Textual attribute\\
        $q$ & Graph query\\
        $\mathbf{A}$ & Adjacency matrix\\
        $\bar{\mathbf{Z}}$ & Normalized encodings for GNN\\
        $\mathcal{N}$ & Gaussian noise\\
        $\mathcal{B}$ & Edge buffer\\
        \hline
    \end{tabular}
    \caption{Notations.}
    \label{tab:notations.}
\end{table}

\subsection{Attributed and Dynamic Graphs}
Let $\mathcal{G} = (\mathcal{V}, \mathcal{E}, \mathbf{X}, \mathbf{T})$ denote an attributed undirected graph with $n$ nodes, edge set $\mathcal{E} \subseteq \binom{\mathcal{V}}{2}$, node features $\mathbf{X} \in \mathbb{R}^{n \times d}$, and textual attributes $\mathbf{T} = \{t_v\}_{v \in \mathcal{V}}$; each node carries a label $y_v \in \mathcal{Y} = \{1, \dots, C\}$. A \emph{dynamic graph} evolves through a stream of \textbf{edge insertions}: arriving edges accumulate in a buffer and are incorporated when a retraining command is issued, starting from an initial state $\mathcal{G}^{(0)}$ (e.g., containing only edges between training nodes). At any point, two types of inference queries may be issued: a \textbf{node classification} query $q_{\text{nc}}(v)$ predicts $\hat{y}_v \in \mathcal{Y}$, and a \textbf{link prediction} query $q_{\text{lp}}(u, v)$ predicts a score $s(u, v) \in [0, 1]$ for the likelihood that edge $\{u,v\}$ exists.

\subsection{Differential Privacy}
\label{sec:prelim-dp}
A randomized mechanism $\mathcal{M}: \mathcal{D} \to \mathcal{R}$ satisfies $(\epsilon, \delta)$-DP~\cite{dwork2014algorithmic} if for all neighboring $D, D' \in \mathcal{D}$ and all measurable $S \subseteq \mathcal{R}$, $\Pr[\mathcal{M}(D) \in S] \leq e^{\epsilon} \Pr[\mathcal{M}(D') \in S] + \delta$. We use the following standard facts.

\noindent\textbf{Edge-level DP.} Two graphs are \emph{neighbors} if they differ by a single undirected edge, $|\mathcal{E} \triangle \mathcal{E}'| = 1$; DP under this adjacency protects the existence of any individual edge. Removing one undirected edge $\{u,v\}$ changes the neighbor aggregation of \emph{both} endpoints, each by at most one unit-norm vector, so the $\ell_2$-sensitivity of aggregation is $\sqrt{2}$ (vs.\ $1$ under the directed-edge convention of GAP~\cite{sajadmanesh2023gap}).

\noindent\textbf{Gaussian mechanism and accounting.} For $f$ with $\ell_2$-sensitivity $\Delta_2(f)$, releasing $f(D) + \mathcal{N}(0, \sigma^2 \mathbf{I})$ satisfies $(\epsilon,\delta)$-DP with $\sigma$ calibrated by a standard accountant; we use the R\'enyi-DP accountant~\cite{mironov2017renyi} of GAP~\cite{sajadmanesh2023gap}. Our analysis is accountant-agnostic (Section~\ref{sec:privacy-analysis}).

\noindent\textbf{Parallel composition.} If $D$ is partitioned into disjoint subsets $D_1, \dots, D_M$ and $\mathcal{M}_m$ accesses only $D_m$ with cost $\epsilon_m$, the total cost of releasing all outputs is $\max_m \epsilon_m$, not the sum~\cite{smith2021making}. This is the structural tool behind our constant budget: naive retraining accesses all edges each time (sequential composition, linear growth), whereas incremental PMA partitions edges into disjoint batches across retrains.

\noindent\textbf{Post-processing immunity.} Any function of a DP release satisfies the same guarantee; downstream computations on privatized outputs (classifier training, fusion, LLM use of released scores) incur no additional cost.

\subsection{Edge-Level DP Graph Learning}
\label{sec:prelim-gap}

Our GNN module builds directly on GAP~\cite{sajadmanesh2023gap}, whose decoupled three-stage pipeline---encoder pre-training, private multi-hop aggregation, and classifier training---we recall here; the construction in this subsection is entirely from~\cite{sajadmanesh2023gap}, except for the sensitivity constant noted below.

\paragraph{Encoder.}
An MLP encoder $f_\theta$ maps node features to lower-dimensional
representations $\mathbf{z}_v = f_\theta(\mathbf{x}_v)$. Since the
encoder accesses only node features and labels, never edges, this stage
is privacy-free under edge-level DP.

\paragraph{Private multi-hop aggregation (PMA).}
Let $\mathbf{A}$ be the adjacency matrix of the current graph and let
$\bar{\mathbf{Z}}^{(0)}$ be the $\ell_2$-row-normalized encodings. For
each hop $k = 1, \dots, K$:
\begin{equation}
\label{eq:pma-agg}
    \tilde{\mathbf{Z}}^{(k)} = \mathbf{A} \, \bar{\mathbf{Z}}^{(k-1)} + \mathcal{N}(\mathbf{0}, \sigma^2 \mathbf{I}),
    \qquad
    \bar{\mathbf{Z}}^{(k)} = \text{Normalize}(\tilde{\mathbf{Z}}^{(k)}),
\end{equation}
where $\text{Normalize}(\cdot)$ applies row-wise $\ell_2$-normalization. Because every row of $\bar{\mathbf{Z}}^{(k-1)}$ has unit norm, the aggregation in~\eqref{eq:pma-agg} has $\ell_2$-sensitivity $\Delta_2 = \sqrt{2}$ under the undirected-edge adjacency of Section~\ref{sec:prelim-dp}. The Gaussian mechanism applied over the $K$ hops therefore yields an edge-level $(\epsilon_{\mathrm{PMA}}, \delta)$-DP guarantee, where $\epsilon_{\mathrm{PMA}}(\sigma, \delta)$ denotes the cost of a single $K$-hop PMA invocation.

\paragraph{Classifier.}
A classifier $g_\phi$ takes the multi-hop features
$\{\bar{\mathbf{Z}}^{(0)}, \dots, \bar{\mathbf{Z}}^{(K)}\}$ as input and predicts node labels. As it operates on already-privatized features, training it incurs no additional privacy cost by post-processing
immunity.

\subsection{Problem Formulation}

\begin{definition}[EDG Problem]
\label{def:edg}
Given a dynamic graph $\{\mathcal{G}^{(0)}, \mathcal{G}^{(1)}, \dots\}$ evolving through edge insertions and a stream of queries $\{q_1, q_2, \dots\}$, where each query $q_i$ is either a node classification query $q_{\text{nc}}(v)$ or a link-prediction query $q_{\text{lp}}(u,v)$, our goal is to design a mechanism $\mathcal{M}$ that (1)~answers each query using the current graph state; (2)~satisfies edge-level $(\epsilon, \delta)$-DP over the \emph{entire execution}; and (3)~maintains high prediction accuracy and ranking quality despite privacy constraints and graph evolution.
\end{definition}

Our threat model protects only edge existence; node text and node labels are treated as public. Under this assumption, LLM predictions derived solely from node text incur no additional edge-level privacy cost.
The central challenge is the \emph{privacy budget tension}: each retraining consumes budget under sequential composition, growing as $M \cdot \epsilon_{\text{PMA}}$ after $M$ retrains. Our incremental PMA processes each edge exactly once, so parallel composition applies and $\epsilon_{\text{total}} = \epsilon_{\text{PMA}}$, independent of $M$ and of the retraining schedule.

\section{Methodology}
\label{sec:method}
In this section, we present PriDyG, a privacy-preserving dynamic graph inference pipeline that combines GNN-based structural learning with LLM-based semantic reasoning under edge-level differential privacy. 
Throughout this section, we present PriDyG using node classification as the default task; Section \ref{sec:link_prediction} describes how the framework is adapted to link prediction. Figure \ref{fig:overview} illustrates the overall architecture.

\subsection{Overview}

PriDyG operates in a streaming setting where edges arrive over time and inference queries must be answered on the fly. The pipeline consists of three modules:
\begin{enumerate}[leftmargin=*, nosep]
    \item \textbf{GNN module}: A DP GNN based on the GAP framework~\cite{sajadmanesh2023gap} that learns structural representations through private multi-hop aggregation (PMA).
    \item \textbf{LLM module}: An LLM that performs zero-shot classification using node textual attributes, consuming no privacy budget.
    \item \textbf{Fusion module}: A confidence-gated mechanism that dynamically combines GNN and LLM predictions.
\end{enumerate}

Since the LLM operates solely on node-level textual attributes,
its predictions are a function of data outside the protected
domain: edge-level DP constrains only the edge set, while node
text is public in our setting {(Definition~\ref{def:edg})}.
LLM inference therefore consumes zero privacy budget, allowing
PriDyG to maintain high-quality predictions even when the GNN
module operates under a tight privacy budget.

\subsection{Node Classification}

\partitle{GNN Module}
The GNN module is the edge-level DP pipeline of
Section~\ref{sec:prelim-gap}. The encoder
maps node text embeddings to $\bar{\mathbf{Z}}^{(0)}$; $K$ rounds of
Gaussian-perturbed aggregation~\eqref{eq:pma-agg} over the current graph
state yield the privatized multi-hop features
$\{\bar{\mathbf{Z}}^{(k)}\}_{k=0}^{K}$ at an edge-level cost of
$\epsilon_{\mathrm{PMA}}$; and the classifier $g_\phi$ maps them to
label logits for a query node $q$:
\begin{equation}
\label{eq:gnn-logits}
    \hat{\mathbf{y}}_q = g_\phi\!\left(\bar{\mathbf{Z}}^{(0)}_q, \dots, \bar{\mathbf{Z}}^{(K)}_q\right) \in \mathbb{R}^{C}.
\end{equation}
The module exposes two quantities downstream: the logit vector
$\hat{\mathbf{y}}_q$, from which the fusion step below forms
$\mathbf{s}_q^{\text{graph}}$, and the top-$m$ labels induced by
$\hat{\mathbf{y}}_q$, which seed the second-stage candidate set of the
LLM module. Since $\{\bar{\mathbf{Z}}^{(k)}\}_{k=0}^{K}$ is a DP
release, both are post-processing and incur no cost beyond
$\epsilon_{\mathrm{PMA}}$.

\partitle{LLM Module: Semantic Reasoning}
For a query node $q$ with textual attributes $t_q$, the LLM computes a probability distribution over candidate labels via token log-probabilities:
\begin{equation}
    P_{\text{LLM}}(l \mid t_q) \propto \exp\!\left(\sum_{j=1}^{|l|} \log P(l_j \mid l_{<j}, \text{prompt})\right).
\end{equation}
where $l_j$ denotes the $j$-th token of label $l$, and $\mathrm{prompt}$ contains the query node's textual attributes and the candidate label set. We employ a two-stage boost query as shown in Figure \ref{fig:overview}: the first stage scores all $C$ candidate labels; the second stage re-scores a smaller candidate set formed by the union of top-$m$ labels from the GNN score and the LLM score, so the total size of the union is $2r$. This strategy shrinks the context window and helps the LLM reduce the prompt length and focus on the most likely labels. Since the LLM accesses only public node text (not edges) and DP GNN predictions, it introduces no additional privacy cost.

\partitle{Confidence-Gated Fusion}
Let $\mathbf{s}_q^{\text{graph}} = \text{softmax}(\hat{\mathbf{y}}_q)$ and $\mathbf{p}_q^{\text{LLM}}$ denote the GNN and LLM score vectors, respectively. Since invoking the LLM for every query incurs substantial computational overhead, we adopt confidence-gated fusion to selectively use the LLM only when the graph-based prediction is uncertain. This design reduces unnecessary LLM inference while allowing complementary textual information to assist with ambiguous queries. Specifically, we define the GNN confidence margin as $m_q = s_q^{(1)} - s_q^{(2)}$, where $s_q^{(1)}$ and $s_q^{(2)}$ are the largest and second-largest probabilities, respectively. If $m_q > \tau$, we directly use the GNN prediction. Otherwise, we fuse the GNN and LLM predictions over the top-$2m$ candidates:
\begin{equation}
    \mathbf{p}_q^{\text{fused}}[l] = \alpha \cdot \bar{\mathbf{s}}_q^{\text{graph}}[l] + (1 - \alpha) \cdot \bar{\mathbf{p}}_q^{\text{LLM}}[l], \quad l \in \text{top-}2m,
\end{equation}
where $\bar{\mathbf{s}}$ and $\bar{\mathbf{p}}$ are re-normalized to the top-$2m$ set. The parameters $(\alpha, \tau)$ are tuned via grid search on a validation set. Since node labels are treated as public under our edge-level privacy model, this tuning incurs no additional privacy cost.

\subsection{Link Prediction
}
\label{sec:link_prediction}
We now turn to the second query type of Definition \ref{def:edg}. 
Link prediction is particularly delicate under edge-level DP, because the prediction target is the existence of an edge, which is exactly the information to be protected. 
In node classification, labels are public supervision, so that channel is privacy-free; any supervised link predictor instead reads edge-existence labels and thus protected data. PriDyG therefore stays in post-processing: its default link predictor is a training-free function of the privatized PMA representations, and the only component that requires edge labels is calibrated on randomized-response labels on a disjoint set of node pairs.

\partitle{Cosine Scoring as the Default Link Predictor}
After private multi-hop aggregation, each node $v$ obtains a privatized multi-hop representation
$
\bar{\mathbf z}_v =
[\bar{\mathbf z}^{(0)}_v \| \bar{\mathbf z}^{(1)}_v \| \cdots \| \bar{\mathbf z}^{(K)}_v],
$
where $\bar{\mathbf z}^{(k)}_v$ is derived from PMA. For a candidate node pair $(u,v)$, PriDyG computes a structural link score as the  cosine similarity of these representations:
$
s_{\mathrm{cos}}(u,v)
=
\frac{
\bar{\mathbf z}_u^\top \bar{\mathbf z}_v
}{
\|\bar{\mathbf z}_u\|_2 \|\bar{\mathbf z}_v\|_2
}.
$
For fusion and calibration, we map the cosine score to $[0,1]$ by the deterministic transformation
$
\tilde{s}_{\mathrm{cos}}(u,v)
=
\frac{s_{\mathrm{cos}}(u,v)+1}{2}.
$

This cosine-based predictor is the default link prediction method in PriDyG. It does not access raw edge labels and does not inspect the original graph during inference. Since it only operates on already privatized PMA representations, it is a post-processing step of the edge-level DP PMA mechanism and incurs no additional privacy cost.

\partitle{LLM semantic score}
Cosine scoring captures privatized structural similarity, but DP noise perturbs every $\bar{\mathbf z}_v$ and can damage the fine-grained ordering that ranking metrics depend on. As complementary evidence, PriDyG computes a semantic score from node text alone:
$
s_{\mathrm{LLM}}(u,v)
=
P_{\mathrm{LLM}}(\text{``Yes''} \mid t_u, t_v),
$
where $t_u$ and $t_v$ are the textual attributes of nodes $u$ and $v$. The prompt is instantiated per domain: for citation graphs, it asks whether two papers are likely connected by citation or topical relevance; for product graphs, whether two products are likely co-purchased.
Since the prompt contains only public node text and no topology, 
$s_{\mathrm{LLM}}$ consumes no privacy budget.

\partitle{Private fusion-weight calibration}
The two scores are combined by the confidence-gated rule of node classification, with the gate measured on a scalar score.
The fusion is post-processing, so the choice of $(\alpha, \tau)$ is not free.
Unlike node classification, where these are tuned on public node labels, the tuning signal is the edge existence itself, which edge-level DP protects.
We therefore calibrate on randomized-response labels.

We randomly partition the candidate pairs in $\binom{\mathcal{V}}{2}$ into the fusion partition 
$\mathcal{P}_{fus}$ (with $r = 5\%$ assigned to $\mathcal{P}_{fus}$) and the remainder $\mathcal{P}_{train}$.
At $t=0$, for every pair $(u,v)\in\mathcal{P}_{fus}$, we release a single privatized copy of its existence bit $b_{uv}=\mathbb{1}[(u,v)\in\mathcal{E}^{(0)}]$ via randomized response~\cite{wang2016using},
\begin{equation}
\tilde{b}_{uv}=b_{uv} \text{ w.p. } 
\tfrac{e^{\epsilon_{\mathrm{RR}}}}{e^{\epsilon_{\mathrm{RR}}}+1},
\quad
\tilde{b}_{uv}=1-b_{uv} \text{ otherwise},
\label{eq:rr}
\end{equation}
where $\epsilon_{\mathrm{RR}}$ is the privacy budget allocated to randomized response. 

Parallel composition applies only if each edge is touched by one mechanism, so we exclude the
initial edges in $\mathcal{P}_{fus}$ from the initial
PMA adjacency. 
A pair of $\mathcal{P}_{fus}$ that becomes an edge at some $t>0$ is instead processed by incremental PMA (Section \ref{sec:incremental_pma}). %
Hence, initial calibration edges are protected through randomized response, while subsequent edge insertions are protected through incremental PMA. 
The weight $\alpha$ and threshold $\tau$ are then selected by grid search, maximizing the
empirical AUC computed on the privatized link labels
$\{\tilde{b}_{uv}\}$.
\begin{lemma}[Privacy of fusion calibration]
\label{lem:rr}
The release of $\{\tilde{b}_{uv}\}_{(u,v)\in\mathcal{P}_{\mathrm{fus}}}$
satisfies edge-level $(\epsilon_{\mathrm{RR}},0)$-DP. Moreover,
the joint release of the full PriDyG output, incremental PMA
outputs (Theorem~\ref{thm:constant-budget}) and the calibration
bits, satisfies edge-level
$\bigl(\max(\epsilon_{\mathrm{PMA}},\epsilon_{\mathrm{RR}}),\,\delta\bigr)$-DP,
for any number of retrains and any schedule.
\end{lemma}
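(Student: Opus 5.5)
The plan has two parts: the first claim follows from randomized response applied coordinate-wise, and the second from parallel composition over a partition of the edge universe.

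\emph{Step 1: the calibration bits alone.} Let $\mathcal{E}$ and $\mathcal{E}'$ be neighboring graphs that differ in a single undirected edge $\{u,v\}$. If $\{u,v\}\notin\mathcal{P}_{\mathrm{fus}}$, the vector of existence bits $(b_{u'v'})_{(u',v')\in\mathcal{P}_{\mathrm{fus}}}$ is the same under both graphs, so the output distributions coincide. Otherwise exactly one coordinate $b_{uv}$ flips. The bits are randomized independently, so the likelihood ratio of any output vector equals the ratio for that single coordinate. By~\eqref{eq:rr} this ratio is at most $\frac{e^{\epsilon_{\mathrm{RR}}}/(e^{\epsilon_{\mathrm{RR}}}+1)}{1/(e^{\epsilon_{\mathrm{RR}}}+1)}=e^{\epsilon_{\mathrm{RR}}}$. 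Summing over any measurable set of outputs gives $(\epsilon_{\mathrm{RR}},0)$-DP.

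\emph{Step 2: partitioning the edge universe.} Write $\binom{\mathcal{V}}{2}=D_{\mathrm{RR}}\sqcup D_{\mathrm{PMA}}$. Here $D_{\mathrm{RR}}$ is the set of pairs in $\mathcal{P}_{\mathrm{fus}}$ that are considered at time $0$, meaning the initial calibration slots. $D_{\mathrm{PMA}}$ consists of all pairs in $\mathcal{P}_{\mathrm{train}}$, together with any pair of $\mathcal{P}_{\mathrm{fus}}$ whose edge is inserted at some $t>0$.

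The construction guarantees two facts. First, initial edges in $\mathcal{P}_{\mathrm{fus}}$ are excluded from the initial PMA adjacency, so the incremental PMA stream never reads any edge of $\mathcal{E}^{(0)}\cap\mathcal{P}_{\mathrm{fus}}$. Second, randomized response reads only the time-$0$ bits $b_{uv}$ for pairs in $\mathcal{P}_{\mathrm{fus}}$. Since $\mathcal{P}_{\mathrm{fus}}$ is drawn randomly but independently of the edges, we condition on it and treat it as public.

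I would formalize the data as a set of time-stamped edge records: an edge inserted at time $t$ is a record $(\{u,v\},t)$. Neighboring executions differ in one such record. Under this view, the RR mechanism depends only on records with $t=0$ in $\mathcal{P}_{\mathrm{fus}}$, and incremental PMA depends only on the complementary records. The two record sets are disjoint.

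\emph{Step 3: composing the mechanisms.} Theorem~\ref{thm:constant-budget} says the full incremental PMA release, across any number of retrains and any schedule, is $(\epsilon_{\mathrm{PMA}},\delta)$-DP on its records. This holds adaptively, since each batch is touched once. Step 1 gives $(\epsilon_{\mathrm{RR}},0)$-DP for the RR release on its records.

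Parallel composition (Section~\ref{sec:prelim-dp}) then bounds the joint release by $\max(\epsilon_{\mathrm{PMA}},\epsilon_{\mathrm{RR}})$, with $\delta=\max(\delta,0)=\delta$. Concretely, a single changed record falls in exactly one part, so only one mechanism's output distribution changes. The other mechanism's output has an identical distribution under both neighbors, and it can be coupled because the mechanisms use independent randomness.

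Everything else in the output is post-processing of these two releases: the grid search over $(\alpha,\tau)$ on $\{\tilde b_{uv}\}$, cosine scoring, the classifier, the LLM scores computed from public text, and the fusion step. So the full PriDyG output inherits the same guarantee.

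\emph{Main obstacle.} The delicate point is the adaptivity and interaction across time. The calibrated $(\alpha,\tau)$, and possibly later PMA steps, are chosen after seeing the RR bits. In addition, a pair in $\mathcal{P}_{\mathrm{fus}}$ may appear in both mechanisms' domains, as a $0$-bit at time $0$ and later as an inserted edge. I would handle this by defining neighboring datasets at the level of time-stamped insertion records, as above. Under that definition the changed record belongs to exactly one part. Adaptive parallel composition then follows by writing the joint density as the product of the conditional densities of the two mechanisms and bounding the single factor that depends on the changed record.

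I would note explicitly that if adjacency is instead defined on the final edge set, a pair in $\mathcal{P}_{\mathrm{fus}}$ inserted after time $0$ is read by both mechanisms. In that case only the sum $\epsilon_{\mathrm{PMA}}+\epsilon_{\mathrm{RR}}$ could be guaranteed, so the lemma must be read under the event-level adjacency.
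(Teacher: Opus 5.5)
Your argument is correct and follows the paper's proof. Both use the randomized-response likelihood ratio $e^{\epsilon_{\mathrm{RR}}}$ for the bits. Both then split on where the differing edge falls, note that the other release is identically distributed, apply parallel composition to get $\max(\epsilon_{\mathrm{PMA}},\epsilon_{\mathrm{RR}})$ with $\delta$, and treat $(\alpha,\tau)$, the cosine and LLM scores, and the fusion as post-processing.

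The one real difference is that you partition time-stamped records rather than pairs. The paper splits on $e^\star\in\mathcal{P}_{\mathrm{train}}$ versus $e^\star\in\mathcal{P}_{\mathrm{fus}}$. In the second case it asserts that the pair ``is excluded from all adjacency matrices.'' That contradicts the method section, where a $\mathcal{P}_{\mathrm{fus}}$ pair that becomes an edge at $t>0$ is processed by incremental PMA. Your record-level split sends exactly that case to the PMA cell, so your version is the more accurate rendering of the same argument.

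Your closing caveat is wrong, however, and so is the conclusion that the lemma ``must be read under the event-level adjacency.'' Parallel composition needs only one release to change in distribution; it does not need only one mechanism to touch the pair. Take the paper's ordinary edge-level adjacency: $e^\star$ is present from its insertion time $t^\star$ in one graph and absent in the other, with everything else equal.
\begin{itemize}
\item If $e^\star\in\mathcal{P}_{\mathrm{fus}}$ and $t^\star>0$, the RR input $b_{e^\star}=\mathbb{1}[e^\star\in\mathcal{E}^{(0)}]$ equals $0$ under both neighbors. The calibration bits are then identically distributed and only the PMA transcript changes.
\item If $t^\star=0$ and $e^\star\in\mathcal{P}_{\mathrm{fus}}$, PMA never reads $e^\star$, so only the bits change.
\end{itemize}
Thus the maximum, not the sum, already holds under final-edge-set adjacency. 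In an insertion-only stream each edge has a single insertion time, so your record adjacency is the same relation.

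A sum $\epsilon_{\mathrm{PMA}}+\epsilon_{\mathrm{RR}}$ would arise only under a substitution adjacency that moves an edge's insertion time, say from $0$ to $t>0$. Under that relation the single-batch argument of Theorem~\ref{thm:constant-budget} also fails, so it is not the model in question. For the same reason, ``exactly one coordinate flips'' in your Step 1 should read ``at most one.''
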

The proof of Lemma~\ref{lem:rr} is in Appendix~\ref{appendix:proofs}.

In all experiments, we set
$\epsilon_{\mathrm{RR}}=\epsilon_{\mathrm{PMA}}$, so the
end-to-end guarantee of PriDyG remains
$(\epsilon_{\mathrm{PMA}},\delta)$-DP, unchanged from
Theorem~\ref{thm:constant-budget}.

\subsection{Dynamic Graph Handling via Incremental PMA}\label{sec:incremental_pma}

A central challenge in dynamic graph inference under DP is that structural changes require re-running PMA, yet each invocation consumes privacy budget. Naively re-running PMA on the \emph{entire} graph at every retrain would cause the total privacy cost to grow linearly as $M \cdot \epsilon_{\text{PMA}}$, where $M$ is the number of retrains. We introduce \emph{incremental PMA}, which keeps the total privacy budget at a \textbf{constant} $\epsilon_{\text{PMA}}$ regardless of the number of retrains.

\subsubsection{Baseline: Geometrically Decaying Budget Allocation}
\label{sec:geometric}
A straightforward baseline is to rerun full PMA over the entire
current graph whenever a retraining command is issued, while assigning
a geometrically decaying privacy budget to successive retraining
operations. Specifically, given a total privacy budget
$\epsilon_{\mathrm{tot}}$, the budget allocated to the $m$-th
retraining can be defined as
$
\epsilon_m = (1-q)q^m \epsilon_{\mathrm{tot}},
0<q<1.
$
Since
$
\sum_{m=0}^{\infty} \epsilon_m = \epsilon_{\mathrm{tot}},
$
this strategy supports an unbounded number of retraining operations
without exceeding the prescribed total privacy budget. Similar
geometric budget-allocation strategies have been adopted for
continual DP release~\cite{feng2024dpi}.

However, this baseline suffers from severe utility degradation over
long-running update streams. As the number of retraining operations
increases, the privacy budget available to later retraining operations
decays exponentially toward zero. Consequently, the required noise
grows rapidly, making the updated model increasingly inaccurate.
Moreover, because each retraining still accesses the entire graph,
historical edges are repeatedly processed; the method controls the
cumulative privacy loss only by progressively weakening the utility
of later releases.

\subsubsection{Incremental PMA Algorithm}
\label{sec:incremental-pma-alg}

The graph evolves through edge insertions. 
As shown in Figure \ref{fig:incremental pma}, incremental PMA maintains an \emph{edge buffer} $\mathcal{B}$ that accumulates newly arrived edges since the last retrain. When a retraining command is issued (by any external policy), incremental PMA processes all edges in $\mathcal{B}$, updates the cached aggregations, and clears $\mathcal{B}$. We denote the edges processed at the $m$-th retrain as $\Delta\mathcal{E}^{(m)}$ and write $\mathcal{E}^{(m)} = \mathcal{E}^{(m-1)} \cup \Delta\mathcal{E}^{(m)}$. The retraining schedule itself is not part of the algorithm and does not affect the privacy guarantee.

At the initial retrain ($m=0$), we run full PMA on $\mathcal{E}^{(0)}$
and cache both the pre-normalization aggregations
$\tilde{\mathbf{Z}}^{(k)}$ and their row-normalized counterparts
$\bar{\mathbf{Z}}^{(k)}$ for all hops $k$. At each subsequent retrain
$m\ge 1$, the hops are updated \emph{sequentially, in increasing order}
$k=1,2,\dots,K$; the update for hop $k$ begins only after the update
for hop $k-1$ has completed. Each retrain proceeds as follows:
\begin{enumerate}

\item\textbf{Build delta adjacency matrix.} Construct $\Delta\mathbf{A}^{(m)}$
from the new edges $\Delta\mathcal{E}^{(m)}$.

\item\textbf{Compute delta aggregation with noise.} For each hop
$k=1,\dots,K$ \emph{in order}:
\begin{equation}
\Delta\tilde{\mathbf{Z}}^{(k)}
=\Delta\mathbf{A}^{(m)}\,\bar{\mathbf{Z}}^{(k-1)}_{\mathrm{new}}
+\mathcal{N}(\mathbf{0},\sigma^2\mathbf{I}),
\label{eq:delta-agg}
\end{equation}
where $\bar{\mathbf{Z}}^{(k-1)}_{\mathrm{new}}$ denotes the hop-$(k{-}1)$
normalized features \emph{finalized in Step 3 of the current retrain},
i.e., strictly before hop $k$ is processed. For the base case $k=1$,
$\bar{\mathbf{Z}}^{(0)}_{\mathrm{new}}=\bar{\mathbf{Z}}^{(0)}$ is the
row-normalized encoder output, which depends only on node text and is
therefore identical across all retrains.
\item \textbf{Update the cached features.}
\begin{equation}
\tilde{\mathbf{Z}}^{(k)}_{\mathrm{new}}
=\tilde{\mathbf{Z}}^{(k)}_{\mathrm{cached}}+\Delta\tilde{\mathbf{Z}}^{(k)},
\qquad
\bar{\mathbf{Z}}^{(k)}_{\mathrm{new}}
=\mathrm{Normalize}\bigl(\tilde{\mathbf{Z}}^{(k)}_{\mathrm{new}}\bigr),
\end{equation}
where $\tilde{\mathbf{Z}}^{(k)}_{\mathrm{cached}}$ is the
pre-normalization aggregation carried over from the previous PMA.
 Store $\tilde{\mathbf{Z}}^{(k)}_{\mathrm{new}}$
and $\bar{\mathbf{Z}}^{(k)}_{\mathrm{new}}$ for the next retrain.

\end{enumerate}
\textbf{Sequential hop updates}.
\label{rem:no-circularity}
Equation~\eqref{eq:delta-agg} intentionally multiplies
$\Delta\mathbf{A}^{(m)}$ by the \emph{updated} lower-hop features
$\bar{\mathbf{Z}}^{(k-1)}_{\mathrm{new}}$ rather than by the stale
cache $\bar{\mathbf{Z}}^{(k-1)}_{\mathrm{cached}}$. This introduces no
circularity: within a retrain, hop $k$ consumes only the
\emph{finalized} output of hop $k-1$, exactly mirroring the sequential
hop structure of full PMA in
Eqs.~\eqref{eq:pma-agg}, and hop $k$ never depends
on its own output. The choice is also important for utility: using the
stale cache would let new edges aggregate only pre-update features, so
a two-hop path traversing two newly inserted edges would transfer no
information at all. With Eq.~\eqref{eq:delta-agg}, hop $1$ is exact
(since $\bar{\mathbf{Z}}^{(0)}$ is edge-independent,
$\tilde{\mathbf{Z}}^{(1)}_{\mathrm{new}}$ equals the hop-$1$
aggregation of a full retrain up to the noise realization), and higher
hops satisfy the capture guarantees of
Section~\ref{sec:error-analysis}. Privacy is unaffected by this adaptivity: as shown in the proof of
Theorem~\ref{thm:constant-budget}, conditional on the outputs released
so far, each hop is a Gaussian mechanism of sensitivity $\sqrt{2}$, and
the $K$ hops of a retrain have exactly the same mechanism structure as
one static PMA invocation, so any accounting valid for static PMA
applies unchanged.
 
\noindent\textbf{Which modules are retrained.}
\label{rem:frozen-modules}
Incremental PMA caches
aggregations of \emph{fixed} input features. Accordingly, after the
initial retrain, the encoder $f_\theta$ is frozen; only the
classification head $g_\phi$, whose output is never re-aggregated, is
retrained at each retrain, and since $g_\phi$ consumes only
already-privatized features, its retraining is post-processing and
incurs no privacy cost.

\begin{figure}[ht]
    \centering
    \includegraphics[width=0.6\linewidth]{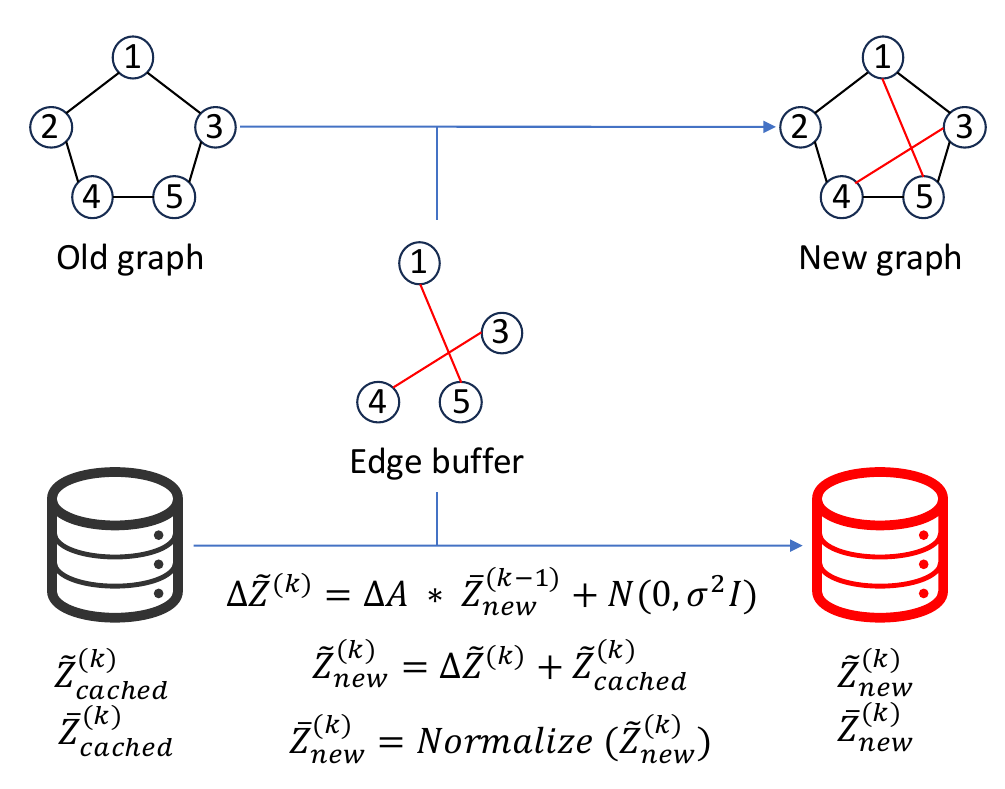}
    \caption{Example for incremental PMA.}
    \label{fig:incremental pma}
\end{figure}

\begin{example}
    \label{exp:incremental pma}
    As shown in Figure \ref{fig:incremental pma}, the edge buffer is $\{(1,5), (3,4)\}$.
    When the next retraining command is issued, incremental PMA builds the
delta adjacency $\Delta\mathbf{A}$, whose only nonzero entries correspond to the two undirected edges: 
$\Delta\mathbf{A}[1,5]=\Delta\mathbf{A}[5,1]=\Delta\mathbf{A}[3,4]=\Delta\mathbf{A}[4,3]=1$.
The delta aggregation at hop $k$ therefore touches only the four
endpoints of the buffered edges:
\[
\Delta\mathbf{A}\,\bar{\mathbf{Z}}^{(k-1)}_{\mathrm{new}}
=\bigl[\,\bar{\mathbf{Z}}^{(k-1)}_{\mathrm{new}}[5],\;
\mathbf{0},\;
\bar{\mathbf{Z}}^{(k-1)}_{\mathrm{new}}[4],\;
\bar{\mathbf{Z}}^{(k-1)}_{\mathrm{new}}[3],\;
\bar{\mathbf{Z}}^{(k-1)}_{\mathrm{new}}[1]\,\bigr].
\]
The entries correspond to node 1 through 5, respectively.  That is, node $1$ receives node $5$'s (already updated) hop-$(k{-}1)$
representation and vice versa, nodes $3$ and $4$ receive each other's representations, and
node $2$---incident to no new edge---receives the zero vector. The noisy
delta is then added to the cached pre-normalization features
$\tilde{\mathbf{Z}}^{(k)}_{\mathrm{cached}}$ and re-normalized.
   
\end{example}
\subsection{Privacy Analysis of Incremental PMA}
\label{sec:privacy-analysis}
The privacy guarantee of incremental PMA follows from a simple structural property: each edge is processed exactly once. When an edge arrives, it enters the buffer; when a retraining command is issued, the buffer is cleared. Regardless of the retraining schedule, each edge incurs the privacy cost of one static PMA invocation.  %
This decouples the privacy analysis from the retraining policy entirely.

\begin{theorem}[Constant privacy budget under incremental PMA]
\label{thm:constant-budget}
Fix the noise scale $\sigma$, the number of hops $K$, and
$\delta\in(0,1)$, and let $\epsilon_{\mathrm{PMA}}$ be any value such
that a single static $K$-hop PMA invocation with noise scale $\sigma$
satisfies edge-level $(\epsilon_{\mathrm{PMA}},\delta)$-DP (e.g., as
certified by the accountant of~\cite{sajadmanesh2023gap}). Let
$B_0=\mathcal{E}^{(0)}$ and $B_m=\Delta\mathcal{E}^{(m)}$ for
$m\ge 1$ denote the edge batches drained at each retrain. Then for any
number of retrains $M$ and any retraining schedule, the full output
of incremental PMA,
$
Y_{0:M}=(Y_0,Y_1,\dots,Y_M),
Y_m=\bigl\{\Delta\tilde{\mathbf{Z}}^{(k)}_m\bigr\}_{k=1}^{K},
$
satisfies edge-level $(\epsilon_{\mathrm{PMA}},\delta)$-DP. In
particular, the privacy cost does not grow with $M$, does not depend on
the retraining schedule, and equals that of a single static PMA
invocation.
\end{theorem}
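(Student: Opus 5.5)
The plan is to prove the statement directly as an \emph{adaptive} form of parallel composition. Plain parallel composition (Section~\ref{sec:prelim-dp}) does not apply verbatim. The reason is that $Y_m$ is computed from $\bar{\mathbf{Z}}^{(k-1)}_{\mathrm{new}}$, which depends on earlier outputs and hence on edges outside $B_m$.

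Fix neighboring edge streams that differ by a single undirected edge $e$. I treat the arrival time of each potential edge, and hence the retraining schedule and the batch assignment, as public and independent of the protected bit. Then $e$ belongs to exactly one batch $B_j$, and every other batch $B_m$, $m\neq j$, is identical in the two worlds. I write the joint law of $Y_{0:M}$ as a chain of conditional kernels $P(Y_m \mid Y_{0:m-1})$, and establish three facts in order.

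\textbf{(i) Every input is a function of released outputs.} For every $m\ge1$ and every hop $k$, the input $\bar{\mathbf{Z}}^{(k-1)}_{\mathrm{new}}$ is a deterministic function of $\bar{\mathbf{Z}}^{(0)}$, the previous outputs $Y_{0:m-1}$, and the already-released hops $\Delta\tilde{\mathbf{Z}}^{(1)}_m,\dots,\Delta\tilde{\mathbf{Z}}^{(k-1)}_m$. I prove this by induction over $(m,k)$ in lexicographic order. The cached $\tilde{\mathbf{Z}}^{(k)}$ is exactly $\sum_{m'<m}\Delta\tilde{\mathbf{Z}}^{(k)}_{m'}$, with the $m'=0$ term being the full PMA output, followed by normalization. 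Here $\bar{\mathbf{Z}}^{(0)}$ is edge-independent, since the encoder is frozen and uses only text and labels.

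\textbf{(ii) Batches not containing $e$ contribute nothing.} For $m\neq j$, the kernel $P(Y_m\mid Y_{0:m-1})$ is the same in both worlds. Each hop is a product of the identical $\Delta\mathbf{A}^{(m)}$ with an input fixed by (i), plus fresh noise. For $m\neq0$ this is immediate; if $j\neq0$, it also covers $m=0$, since $B_0$ is then identical.

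\textbf{(iii) The batch containing $e$ costs exactly one static PMA.} Conditional on any prefix $Y_{0:j-1}=y$, the kernel $P(Y_j\mid y)$ is a $K$-fold adaptive Gaussian mechanism. At hop $k$ it releases $\Delta\mathbf{A}^{(j)}\mathbf{W}_k+\mathcal{N}(\mathbf{0},\sigma^2\mathbf{I})$, where the unit-row-norm matrix $\mathbf{W}_k$ depends only on public data and earlier releases. Toggling $e$ changes two rows, each by one unit vector, so the sensitivity is $\sqrt2$. This is the identical mechanism structure certified for static PMA, whose hops are also adaptively chosen unit-norm inputs. Hence $P(Y_j\mid y)$ is $(\epsilon_{\mathrm{PMA}},\delta)$-indistinguishable for every $y$.

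To combine, take a measurable set $S$. By (ii), the prefix $Y_{0:j-1}$ has the same law $\mu$ in both worlds. Also by (ii), the suffix $Y_{j+1:M}$ is generated from $(Y_{0:j-1},Y_j)$ by the same kernel $Q$ in both worlds. For each prefix $y$, define $g_y(y_j)=Q\bigl(\{\text{suffix}: (y,y_j,\text{suffix})\in S\}\mid y,y_j\bigr)\in[0,1]$. Then
\[
\Pr[Y\in S]=\int \mathbb{E}\bigl[g_y(Y_j)\mid y\bigr]\,d\mu(y)\le\int\bigl(e^{\epsilon_{\mathrm{PMA}}}\,\mathbb{E}'\bigl[g_y(Y_j)\mid y\bigr]+\delta\bigr)\,d\mu(y)=e^{\epsilon_{\mathrm{PMA}}}\Pr'[Y\in S]+\delta .
\]
The inequality uses (iii), extended from indicators to $[0,1]$-valued test functions in the standard way, for example through the layer-cake representation. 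Since the bound involves only the one batch $B_j$, it is independent of $M$ and of the schedule.

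The main obstacle I expect is step (iii), in two respects. First, I must argue carefully that the static accountant's guarantee is genuinely a guarantee for adaptively chosen unit-norm inputs, not for the particular data-dependent inputs of static PMA. I would justify this by noting that the RDP accountant of~\cite{sajadmanesh2023gap} composes $K$ Gaussian mechanisms adaptively, so it is insensitive to how each $\mathbf{W}_k$ was chosen. Second, I must make explicit that treating batch membership as public, rather than as protected data, is a modeling assumption, and that it is what makes $e$ fall into a single batch in both worlds.
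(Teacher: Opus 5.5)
Your proposal is correct and follows essentially the paper's proof: the paper also isolates the unique retrain $m^\star$ whose batch contains the differing edge, observes that the prefix is identically distributed, treats retrain $m^\star$ conditional on the prefix as $K$ adaptively composed Gaussian mechanisms of sensitivity $\sqrt{2}$ (the mechanism structure of one static PMA), and treats all later retrains as post-processing. Your explicit kernel factorization, the layer-cake step, and your stated assumption that the schedule and batch membership are public make rigorous what the paper leaves implicit. On the obstacle you flag, there is a cleaner justification than appealing to the accountant: at every hop, in both the static and incremental settings, toggling the edge shifts the mean by exactly $\sqrt{2}$ in Frobenius norm (the rows are unit-norm), so the conditional privacy-loss distributions coincide and the transfer holds for any valid $\epsilon_{\mathrm{PMA}}$, not only accountant-certified ones.
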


Proof of Theorem \ref{thm:constant-budget} can be found in Appendix \ref{appendix:proofs}.

\subsection{Error Analysis of Incremental PMA}
\label{sec:error-analysis}
We now characterize the approximation error of incremental PMA relative to a full retrain on the current graph. There are two sources of errors.  First, because cached contributions from older edges are not recomputed, incremental PMA omits propagation paths that traverse a newer edge followed by an older edge. This yields a structured approximation error: one-hop propagation is exact, and at least half of all two-hop transfers are retained. Second, each retrain adds fresh Gaussian noise to the cached aggregation, causing the accumulated noise variance to grow linearly with the number of retrains. We formalize these two sources of error below.
We first analyze path capture in Theorem \ref{thm:captured-paths} using noiseless, pre-normalization linear propagation (row-wise
$\ell_2$-normalization is a per-hop post-processing and does not change
which path can contribute information). We then analyze noise accumulation 
separately in Proposition~\ref{prop:noise-accum}. 

\partitle{Captured-path analysis}
Assign each edge the
index of the retrain at which it was drained from the buffer, with the
initial edge set $\mathcal{E}^{(0)}$ as batch $0$, and let
$\mathbf{A}_i$ denote the adjacency matrix of $i$th batch $B_i$. Write
$\mathbf{Z}=\bar{\mathbf{Z}}^{(0)}$, which depends only on node text
and is identical across retrains, and let $\mathbf{F}^{(k)}$ and
$\mathbf{I}^{(k)}$ denote the $k$-hop aggregations after retrain $m$
under a full retrain and under incremental PMA, respectively.
 
\begin{theorem}[Captured-path characterization]
\label{thm:captured-paths}
After any retrain $m$ and for every hop $k$,
\begin{equation}
\mathbf{I}^{(k)}
=\!\!\sum_{0\,\le\, i_1\le\cdots\le i_k\,\le\, m}\!\!
\mathbf{A}_{i_k}\cdots\mathbf{A}_{i_1}\,\mathbf{Z},
\label{eq:captured-paths}
\end{equation}
whereas the full retrain
$\mathbf{F}^{(k)}=\bigl(\sum_{i\le m}\mathbf{A}_i\bigr)^{k}\mathbf{Z}$
expands into \emph{all} length-$k$ paths over
$\{\mathbf{A}_0,\dots,\mathbf{A}_m\}$. Equivalently, a $k$-hop path
contributes to $\mathbf{I}^{(k)}$ if and only if the batch indices of
its edges, read in traversal order, are non-decreasing: information may
flow from older edges onto newer ones, but never from newer edges back
across older ones.
\end{theorem}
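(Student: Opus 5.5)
We argue by induction on the retrain index $m$, working throughout in the noiseless, pre-normalization linear model. There $\bar{\mathbf{Z}}^{(k-1)}_{\mathrm{new}}$ is replaced by the unnormalized aggregation $\mathbf{I}^{(k-1)}$ (with $\mathbf{I}^{(0)}=\mathbf{Z}$), and the update of Eq.~\eqref{eq:delta-agg} together with Step~3 becomes the recursion
\[
\mathbf{I}^{(k)}_m=\mathbf{I}^{(k)}_{m-1}+\mathbf{A}_m\,\mathbf{I}^{(k-1)}_m,\qquad \mathbf{I}^{(k)}_0=\mathbf{A}_0^{k}\mathbf{Z}.
\]
The base case is the full PMA on $B_0$. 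The key feature of this recursion is that the delta term multiplies the \emph{current} retrain's lower hop $\mathbf{I}^{(k-1)}_m$, which was finalized before hop $k$ is processed. We therefore nest a second induction on $k$ inside the induction on $m$.

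The inductive claim is $\mathbf{I}^{(k)}_m=S_m^{(k)}\mathbf{Z}$, where
\[
S_m^{(k)}=\sum_{0\le i_1\le\cdots\le i_k\le m}\mathbf{A}_{i_k}\cdots\mathbf{A}_{i_1}.
\]
We split $S_m^{(k)}$ according to whether the last index satisfies $i_k<m$ or $i_k=m$.

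\textbf{Case $i_k<m$.} Here all indices are at most $m-1$, so these terms sum to exactly $S_{m-1}^{(k)}$. By the outer hypothesis this contributes $\mathbf{I}^{(k)}_{m-1}$.

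\textbf{Case $i_k=m$.} We factor out $\mathbf{A}_m$ on the left. What remains ranges over non-decreasing sequences $i_1\le\cdots\le i_{k-1}\le m$, with no further restriction, so these terms sum to $\mathbf{A}_m S_m^{(k-1)}$. By the inner hypothesis on $k-1$ at the same $m$, this contributes $\mathbf{A}_m\mathbf{I}^{(k-1)}_m$.

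Adding the two cases reproduces the recursion. For $k=1$ the inner base case is $S^{(0)}=\mathbf{I}$, so hop~$1$ reduces to $\sum_i\mathbf{A}_i\mathbf{Z}$, matching the claimed exactness of hop one. The base case $m=0$ is immediate, since the only index sequence is all zeros and gives $\mathbf{A}_0^k\mathbf{Z}$.

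The statement about $\mathbf{F}^{(k)}$ follows from distributivity: expanding $(\sum_{i\le m}\mathbf{A}_i)^k$ gives the sum of $\mathbf{A}_{i_k}\cdots\mathbf{A}_{i_1}$ over all $(m+1)^k$ index tuples. For the path interpretation, we expand the products entrywise. Each entry of $\mathbf{A}_{i_k}\cdots\mathbf{A}_{i_1}\mathbf{Z}$ is a sum over walks $v_0\to v_1\to\cdots\to v_k$ in which the edge $\{v_{j-1},v_j\}$ lies in batch $B_{i_j}$. The matrix applied first acts on the first edge traversed from the source feature, so traversal order corresponds to index order $i_1,\dots,i_k$.

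The batches are disjoint, so each walk's edges determine a unique batch-index tuple. Hence each walk appears exactly once in $\mathbf{F}^{(k)}$, and it appears in $\mathbf{I}^{(k)}$ if and only if its tuple is non-decreasing. This gives the ``if and only if'' characterization.

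The main obstacle is bookkeeping rather than any hard estimate. We must check that the order of hop processing within a retrain, which uses the already-updated $\mathbf{I}^{(k-1)}_m$, is exactly what allows the last index to equal $m$ while the earlier indices range freely up to $m$. We must also fix the convention that matrix order corresponds to traversal order. We would state the walk/edge correspondence explicitly to avoid any direction ambiguity, noting that each $\mathbf{A}_i$ is symmetric so a walk may use an edge in either direction.
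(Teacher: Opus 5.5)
Your proposal is correct and follows essentially the same route as the paper: the same noiseless recursion $\mathbf{I}^{(k)}_m=\mathbf{I}^{(k)}_{m-1}+\mathbf{A}_m\mathbf{I}^{(k-1)}_m$, induction on $(m,k)$ in lexicographic order, and the split of non-decreasing index sequences into those with $i_k<m$ and those with $i_k=m$. Your explicit walk-level step, where disjoint batches give each walk a unique index tuple, spells out the ``if and only if'' reading that the paper leaves implicit.
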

 
Proof of Theorem \ref{thm:captured-paths} can be found in Appendix~\ref{appendix:proofs}, which unrolls the update
recursion across retrains; the error
$\mathbf{F}^{(k)}-\mathbf{I}^{(k)}$ collects exactly the words
containing an older-batch step \emph{after} a newer-batch step, whose
source is the staleness of the cache: old edges, frozen in the cache,
never re-aggregate features refreshed by later insertions. The error
is thus structured and bounded---incremental PMA omits only these
``backward-in-time'' paths---rather than an unbounded drift, and it
vanishes entirely for $K=1$:
 
\begin{corollary}[One hop: lossless in signal]
\label{cor:hop1-exact}
After every retrain, $\mathbf{I}^{(1)}=\mathbf{F}^{(1)}$, thus  incremental
PMA preserves all one hop information in the noiseless aggregation. %
\end{corollary}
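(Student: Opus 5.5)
Since the corollary is the $k=1$ specialization of Theorem~\ref{thm:captured-paths}, the plan is to instantiate that theorem at $k=1$ and compare the result with the full-retrain expression. At $k=1$ a path is a single edge, and a one-letter word trivially has a non-decreasing index sequence. Hence the index set $\{0\le i_1\le m\}$ in \eqref{eq:captured-paths} has no constraint beyond $0\le i_1\le m$, and we get
\[
\mathbf{I}^{(1)}=\sum_{i_1=0}^{m}\mathbf{A}_{i_1}\mathbf{Z}
=\Bigl(\sum_{i\le m}\mathbf{A}_i\Bigr)\mathbf{Z}=\mathbf{F}^{(1)} .
\]
Equivalently, the error $\mathbf{F}^{(1)}-\mathbf{I}^{(1)}$ collects words that contain an older-batch step after a newer-batch step. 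A length-one word has no such pair, so this set is empty.

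As a sanity check independent of the theorem, I would also give the direct argument by induction on the retrain index $m$, in the noiseless, pre-normalization setting fixed in Section~\ref{sec:error-analysis}.
\begin{itemize}
\item \emph{Base case.} At $m=0$, full PMA gives $\mathbf{I}^{(1)}=\mathbf{A}_0\mathbf{Z}=\mathbf{F}^{(1)}$.
\item \emph{Inductive step.} At retrain $m$, Eq.~\eqref{eq:delta-agg} with $k=1$ uses $\bar{\mathbf{Z}}^{(0)}_{\mathrm{new}}=\mathbf{Z}$, which depends only on node text and is identical across retrains. The cache update therefore gives $\mathbf{I}^{(1)}_m=\mathbf{I}^{(1)}_{m-1}+\mathbf{A}_m\mathbf{Z}$.
\item \emph{Conclusion.} By the inductive hypothesis this equals $\bigl(\sum_{i<m}\mathbf{A}_i\bigr)\mathbf{Z}+\mathbf{A}_m\mathbf{Z}=\bigl(\sum_{i\le m}\mathbf{A}_i\bigr)\mathbf{Z}$, by linearity of matrix multiplication in the adjacency.
\end{itemize}

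I also need $\sum_{i\le m}\mathbf{A}_i$ to be the adjacency matrix of the current graph $\mathcal{E}^{(m)}$. This holds because the batches $B_i$ are disjoint, since each edge is drained from the buffer exactly once, and their union is $\mathcal{E}^{(m)}$. So no entry is double counted and no entry exceeds $1$.

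There is no real obstacle here. The only points to state explicitly are these two: the hop-0 input is edge-independent, so it is never stale, and the batch decomposition of the adjacency is a disjoint sum. Both are established in Section~\ref{sec:incremental-pma-alg}.
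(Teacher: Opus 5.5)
Your proposal is correct and is essentially the paper's proof: the paper likewise specializes Theorem~\ref{thm:captured-paths} to $k=1$, notes that every length-one word is trivially non-decreasing, and concludes $\mathbf{I}^{(1)}=\sum_{b\le m}\mathbf{A}_b\mathbf{Z}=\mathbf{A}_{\mathrm{new}}\mathbf{Z}=\mathbf{F}^{(1)}$. Your direct induction on $m$ is the theorem's recursion specialized to $k=1$, and your remark that disjoint batches make $\sum_{i\le m}\mathbf{A}_i$ the current adjacency is also valid; the paper leaves both implicit.
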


\begin{corollary}[Two hops: at least half of all wedge transfers]
\label{cor:hop2-capture}
Two-hop information flows along wedges $u\!-\!w\!-\!v$, each carrying
two distinct information transfers, one per traversal direction. Let
$\rho$ be the fraction of wedge transfers whose two edges lie in
distinct batches. Then
$\text{captured fraction at hop } 2
= 1-\tfrac{1}{2}\rho \ge \tfrac{1}{2}.$

\end{corollary}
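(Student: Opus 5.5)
Corollary~\ref{cor:hop2-capture} follows from Theorem~\ref{thm:captured-paths} specialized to $k=2$. In that case $\mathbf{I}^{(2)}=\sum_{i_1\le i_2}\mathbf{A}_{i_2}\mathbf{A}_{i_1}\mathbf{Z}$, while $\mathbf{F}^{(2)}=\sum_{i_1,i_2}\mathbf{A}_{i_2}\mathbf{A}_{i_1}\mathbf{Z}$. The plan is to reduce the claim to counting which ordered wedge traversals satisfy the monotonicity condition $i_1\le i_2$.

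\textbf{Setting up the count.} I will expand entrywise. The $(v,\cdot)$ row of $\mathbf{A}_{i_2}\mathbf{A}_{i_1}\mathbf{Z}$ sums $\mathbf{Z}_u$ over walks $u\to w\to v$ with $\{u,w\}\in B_{i_1}$ and $\{w,v\}\in B_{i_2}$. A wedge $u-w-v$ with $u\neq v$ carries exactly two transfers: $u\to w\to v$, with first edge $\{u,w\}$ and second edge $\{w,v\}$, and the reverse $v\to w\to u$, with the roles of the edges swapped. Let $b=\mathrm{batch}(\{u,w\})$ and $b'=\mathrm{batch}(\{w,v\})$. Because each edge lies in exactly one batch (batches are disjoint, as used in Theorem~\ref{thm:constant-budget}), each transfer is a single term of the sum.

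\textbf{Case split.} If $b=b'$, both traversal orders have non-decreasing indices, so both transfers are captured. If $b\neq b'$, exactly one of $b\le b'$ and $b'\le b$ holds strictly. Hence exactly one of the two transfers is captured and the other lies in $\mathbf{F}^{(2)}-\mathbf{I}^{(2)}$. Let $W$ be the number of wedges. By the definition of $\rho$, the transfers split into pairs that are either both same-batch or both distinct-batch, so a fraction $\rho$ of the $2W$ transfers come from distinct-batch wedges. The captured count is then $2W(1-\rho)+\tfrac12\cdot 2W\rho$. Dividing by $2W$ gives $1-\tfrac12\rho$, and since $\rho\le 1$ this is at least $\tfrac12$.

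\textbf{Main obstacle.} The delicate point is the bookkeeping for degenerate walks. The expansion of $\mathbf{A}^2$ also contains backtracking walks $u\to w\to u$, which produce diagonal terms, and these are not wedges. I would state explicitly that the capture fraction is measured over wedge transfers with $u\neq v$, as the corollary specifies. For backtracking walks both steps use the same edge and hence the same batch, so they are always captured and can only increase the fraction if they are included. I would also check that multiple wedges through different middle nodes $w$ connecting the same pair $(u,v)$ are counted separately. This is consistent with the entrywise matrix sum, so no cancellation occurs: all coefficients are nonnegative in the noiseless, pre-normalization analysis.
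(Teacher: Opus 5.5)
Your proposal is correct and follows essentially the same route as the paper. Both specialize Theorem~\ref{thm:captured-paths} to $k=2$ and split wedges by batch: if the two edges share a batch, both transfers are non-decreasing and captured; otherwise exactly the older-edge-first transfer is captured, and averaging gives $(1-\rho)+\tfrac12\rho=1-\tfrac12\rho\ge\tfrac12$. Your bookkeeping for backtracking walks and for multiple middle vertices is a harmless refinement that the paper leaves implicit, while the paper additionally notes that for a single retrain the lost term is exactly $\mathbf{A}_{\mathrm{old}}\,\Delta\mathbf{A}\,\mathbf{Z}$.
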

 
In Example~\ref{exp:incremental pma}, edges $(1,3)$ and $(1,5)$ arrive in
different batches, so the transfer $3\!\to\!1\!\to\!5$ is the captured
path and $5\!\to\!1\!\to\!3$ is the lost one.

\partitle{Noise accumulation analysis}
We next quantify how the Gaussian perturbations introduced across incremental retrains accumulate in the cached representations and compare this growth with the geometrically decaying baseline.
 
\begin{proposition}[Noise accumulation: linear vs.\ exponential]
\label{prop:noise-accum}
Fix the noise scale $\sigma$ in each PMA invocation. After $M$ incremental retrains, the
hop-$k$ cache $\tilde{\mathbf{Z}}^{(k)}$ contains the sum of $M{+}1$
independent Gaussian noise 
$\mathcal{N}(\mathbf{0},\sigma^2\mathbf{I})$, yielding a total variance of $(M{+}1)\sigma^2$ and a $\sqrt{M{+}1}$-fold inflation in
noise magnitude over a single full PMA on the final graph under the same 
privacy budget. At hop $1$,  Corollary~\ref{cor:hop1-exact} shows that the noiseless signal matches the one-shot full retrain, so this
noise is the only deviation. In
contrast, under the geometrically decaying baseline of
Section~\ref{sec:geometric}, the per-retrain noise scale grows
\emph{exponentially}, $\sigma_m=\Theta\!\bigl(q^{-m}\bigr)$ at fixed
$\delta$, and perturbs the entire released representation.
\end{proposition}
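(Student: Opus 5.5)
The plan is to prove the two halves of Proposition~\ref{prop:noise-accum} separately, starting with the incremental cache. We unroll Step~3 of the incremental PMA algorithm. At retrain $0$ the cache is set to $\tilde{\mathbf{Z}}^{(k)}_0=\mathbf{A}_0\bar{\mathbf{Z}}^{(k-1)}+\mathbf{N}^{(k)}_0$. Each later retrain $m\ge 1$ adds $\Delta\tilde{\mathbf{Z}}^{(k)}_m=\mathbf{A}_m\bar{\mathbf{Z}}^{(k-1)}_{\mathrm{new},m}+\mathbf{N}^{(k)}_m$ as in Eq.~\eqref{eq:delta-agg}. A straightforward induction on $M$ then shows that after $M$ retrains
\begin{equation*}
\tilde{\mathbf{Z}}^{(k)}=\sum_{m=0}^{M}\mathbf{A}_m\bar{\mathbf{Z}}^{(k-1)}_{\mathrm{new},m}+\sum_{m=0}^{M}\mathbf{N}^{(k)}_m .
\end{equation*}

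The additive noise terms $\mathbf{N}^{(k)}_m\sim\mathcal{N}(\mathbf{0},\sigma^2\mathbf{I})$ are drawn freshly at each retrain, so they are mutually independent. They are also independent of the past, since each is drawn after conditioning on all previously released quantities. Their sum is therefore $\mathcal{N}(\mathbf{0},(M{+}1)\sigma^2\mathbf{I})$, and its standard deviation is $\sqrt{M{+}1}\,\sigma$. A single full PMA on the final graph that meets the same $(\epsilon_{\mathrm{PMA}},\delta)$ guarantee uses noise scale $\sigma$; this is the content of Theorem~\ref{thm:constant-budget}, which says incremental PMA at scale $\sigma$ has the same cost as one static invocation at scale $\sigma$. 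Comparing the two gives the claimed $\sqrt{M{+}1}$-fold inflation.

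For hop $1$, the signal part $\sum_m\mathbf{A}_m\bar{\mathbf{Z}}^{(0)}$ equals $\mathbf{F}^{(1)}$ by Corollary~\ref{cor:hop1-exact}, so the only deviation from a one-shot full retrain is the accumulated Gaussian term. One subtlety deserves a remark. For $k\ge2$ the noise injected at lower hops also propagates nonlinearly into $\bar{\mathbf{Z}}^{(k-1)}_{\mathrm{new},m}$ through normalization. We will therefore state the $(M{+}1)\sigma^2$ count as the \emph{additive} noise directly injected into the hop-$k$ cache, which is exactly the claim, and leave the propagated noise to the path analysis.

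For the geometric baseline, retrain $m$ receives budget $\epsilon_m=(1-q)q^m\epsilon_{\mathrm{tot}}$ and must release a $K$-hop Gaussian mechanism with sensitivity $\sqrt{2}$ per hop. At fixed $\delta$, the analytic Gaussian or R\'enyi calibration gives $\sigma(\epsilon)=\Theta(\sqrt{K\log(1/\delta)}/\epsilon)$ as $\epsilon\to 0$. For example, under RDP the composed cost is $\epsilon\approx K\alpha/\sigma^2+\log(1/\delta)/(\alpha-1)$, and optimizing over $\alpha$ gives $\epsilon=\Theta(\sqrt{K\log(1/\delta)}/\sigma)$ in the small-$\epsilon$ regime. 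Substituting $\epsilon_m$ yields $\sigma_m=\Theta(q^{-m})$. Because each baseline release reruns full PMA, this noise perturbs every row of every hop, which gives the final sentence of the proposition.

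The main obstacle is justifying the inverse-linear scaling $\sigma=\Theta(1/\epsilon)$ uniformly in the accountant, because the paper's analysis is accountant-agnostic. I would handle this by proving matching bounds. The lower bound comes from the standard fact that any Gaussian mechanism satisfying $(\epsilon,\delta)$-DP with sensitivity $\Delta$ needs $\sigma\ge c\,\Delta/\epsilon$ for small $\epsilon$ at fixed $\delta<1/2$; this follows from a hypothesis-testing argument on two shifted Gaussians. The upper bound comes from the classical calibration $\sigma=\Delta\sqrt{2\log(1.25/\delta)}/\epsilon$. Together these pin down $\sigma_m=\Theta(q^{-m})$.
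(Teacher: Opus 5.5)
Your treatment of the incremental cache matches the paper's proof. You unroll Step~3 into $M{+}1$ independent draws, use Theorem~\ref{thm:constant-budget} to argue that the one-shot comparator at the same budget uses the same $\sigma$, and invoke Corollary~\ref{cor:hop1-exact} at hop~$1$. Your caveat that the $(M{+}1)\sigma^2$ count refers only to noise injected directly into the hop-$k$ cache is a reasonable sharpening.

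The gap is in the geometric-baseline half, at the step you flag as the main obstacle. Your ``standard fact'' is false: it is not true that an $(\epsilon,\delta)$-DP Gaussian mechanism needs $\sigma\ge c\,\Delta/\epsilon$ for small $\epsilon$ at fixed $\delta<1/2$. Likewise, the analytic Gaussian calibration does not behave like $\Theta(1/\epsilon)$ as $\epsilon\to 0$. A mechanism is $(0,\delta)$-DP exactly when the total variation distance between its output laws on neighbouring inputs is at most $\delta$. For Gaussians shifted by $\Delta$ this distance is $2\Phi(\Delta/(2\sigma))-1$, with $\Phi$ the standard normal CDF. It is at most $\delta$ for every $\sigma\ge \Delta/\bigl(2\Phi^{-1}((1+\delta)/2)\bigr)$. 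Hence, at fixed $\delta>0$, one finite noise scale is $(\epsilon,\delta)$-DP for every $\epsilon\ge 0$. The minimal noise needed as $\epsilon_m\to 0$ therefore converges to a finite limit rather than growing like $q^{-m}$. The hypothesis-testing argument gives $\sigma\ge c\,\Delta\sqrt{\log(1/\delta)}/\epsilon$ only when $\delta$ is small relative to $\epsilon$. The geometric schedule leaves that regime once $\epsilon_m$ becomes comparable to $\delta$.

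So $\sigma_m=\Theta(q^{-m})$ cannot be proved uniformly over accountants; it is a property of the calibration rule, not of the privacy requirement. The repair is to drop the lower-bound step and state the rate relative to the calibration in use. That is all the paper's proof does when it writes $\sigma_m\propto 1/\epsilon_m$. Your own R\'enyi computation already supplies this. With the standard RDP-to-$(\epsilon,\delta)$ conversion,
$\epsilon(\sigma)=\min_{\alpha>1}\bigl[K\alpha/\sigma^2+\log(1/\delta)/(\alpha-1)\bigr]=K/\sigma^2+2\sqrt{K\log(1/\delta)}/\sigma$.
This is strictly positive and equals $\Theta(\sqrt{K\log(1/\delta)}/\sigma)$ as $\sigma\to\infty$. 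Inverting at $\epsilon_m=(1-q)q^m\epsilon_{\mathrm{tot}}$ then gives $\sigma_m=\Theta(q^{-m})$. The classical calibration $\sigma\propto\sqrt{2\log(1.25/\delta)}/\epsilon$ gives the same. With that restriction stated explicitly, the rest of your proposal goes through.
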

 
The one-shot reference is an \emph{oracle} that observes the
entire final graph in advance and therefore
cannot answer any query while the stream is evolving. So the
$\sqrt{M{+}1}$ overhead should be interpreted as the intrinsic cost of
continual release relative to the unattainable oracle, and price is an $O(M)$ accumulation of noise variance.
Incremental PMA therefore trades a structured, bounded approximation
error (omitting newer-to-older paths while retaining at least half of
all two-hop transfers) and a polynomially growing noise floor for a
constant privacy budget, whereas geometric decay is structurally
unbiased but exponentially noisy.

Proofs of Corollary \ref{cor:hop1-exact}, \ref{cor:hop2-capture}, and Proposition \ref{prop:noise-accum} can be found in Appendix \ref{appendix:proofs}.

\section{Experiments}
\label{sec:experiments}
We evaluate PriDyG on node classification and link prediction across four benchmark datasets under edge-level DP, in both static and dynamic settings. 

\subsection{Experimental Setup}
\partitle{Datasets}
We use four graph benchmarks spanning different domains and scales:
\textbf{Cora}~\cite{mccallum2000automating}: A citation network of machine learning papers, where each node represents a paper with title and abstract, and edges denote citation links. The task is to classify papers into 7 categories (e.g., Neural Networks, Reinforcement Learning). \textbf{PubMed}~\cite{xu2025pubmed}: A biomedical citation network of diabetes-related publications. Nodes are papers classified into 3 categories of diabetes research.
\textbf{ogbn-arxiv}~\cite{hu2020open}: A large-scale citation network of computer science arXiv papers spanning 40 subject areas (e.g., cs.CV, cs.LG, cs.CL).
\textbf{ogbn-products}~\cite{hu2020open}: A co-purchasing network from Amazon, where nodes are products with textual descriptions and edges connect frequently co-purchased items. Products are classified into 44 categories.
Each dataset is integrated with a train/valid/test node split. Therefore, each node in the graph belongs to one of the train/valid/test sets.

For all datasets, we encode node text (title and abstract/description) into 1024-dimensional embeddings using BGE-large-en-v1.5~\cite{bge_embedding} as node features.

\partitle{Dynamic Setting}
In the streaming evaluation, we consider the edge insertion case. The graph starts with only train-train edges. Remaining edges are inserted one by one in random order. Every insertion has an independent probability $p$ to trigger retraining (default $p$ = $0.5\%$).

\partitle{Privacy and Model Configuration}
Unless otherwise stated, we use $\delta=10^{-5}$ for all experiments.
We set $\epsilon_{\mathrm{PMA}}=6.0$ with $K=2$ hops, calibrating the
noise scale $\sigma$ to this budget with the R\'enyi-DP accountant of
GAP~\cite{sajadmanesh2023gap}.
The GNN module uses a 3-layer MLP encoder (hidden dim = 256, encoding dim = 128, dropout = 0.5) pre-trained for 100 epochs, followed by a 2-layer MLP classification head trained for 200 epochs. The LLM module uses Llama-3-8B-Instruct~\cite{llama3modelcard} for logit scoring.

\partitle{Baselines}
For node classification, we compare PriDyG against the following methods:
\textbf{ProGAP}~\cite{sajadmanesh2024progap}: the state-of-the-art static
edge-level DP GNN and a stronger static backbone than the
DP GNN pipeline inside PriDyG; we adopt it so that our comparisons
are against the stronger baseline. In the dynamic setting, it is
retrained from scratch at each retrain, with $\epsilon$ accumulating
under sequential composition (trivial retrain) or with geometrically
decaying budgets.
\textbf{LLM-only}: The ablation baseline uses LLM logit scoring using only node text, without any graph-based component.
\textbf{PriDyG w/o LLM}: The ablation baseline uses Graph score without any LLM-based component.
\textbf{Other non-private baselines}: results on the same datasets as reported in~\citep{sun2025graphicl}, marked with `*'.

For link prediction, we compare these baseline methods:
\textbf{Cosine}: GNN-based node embeddings with cosine similarity scoring.
\textbf{LLM-only}: LLM logit scoring using only node text, without any graph-based component.
\textbf{DPLP~\cite{ran2024differentially}}: A private-training link prediction method that relies on query-specific subgraph extraction, which does not protect the queried graph structure at inference time.

\subsection{Node Classification Results}

\noindent\textbf{Static Evaluation.}
Table~\ref{tab:static_nc} reports node classification accuracy in the static setting, where the full graph is available, and the model is trained once. PriDyG achieves accuracy comparable to the ProGAP baseline and other non-DP baselines. 

\begin{table}[t]
\caption{Static node classification accuracy.}
\label{tab:static_nc}
\centering
\scriptsize
\begin{tabular}{lcccccc}
\toprule
\textbf{Method} & \textbf{Cora} & \textbf{PubMed} & \textbf{Arxiv} & \textbf{Products} & $\epsilon$ & $\delta$ \\
\midrule

LLM-only        & 0.613 & 0.887 & 0.520 & 0.631 & 0 & 0\\
GCN* & 0.751 & 0.864 & 0.726 & 0.756 & N/A & N/A\\
LLaGA-HO*~\cite{chen2024llaga} & 0.886 & 0.888 & 0.740 & 0.736 & N/A & N/A\\
GraphGPT*~\cite{tang2024graphgpt} & N/A & 0.847 & 0.622 & N/A & N/A & N/A\\
GraphICL-70B*\cite{sun2025graphicl} & 0.836 & 0.931 & 0.737 & 0.815 & N/A & N/A \\
ProGAP  & 0.725 & 0.914 & 0.729 & 0.859 & 6.0 & $10^{-5}$ \\
PriDyG w/o LLM & 0.722 & 0.914 & 0.726 & 0.850 & 6.0 & $10^{-5}$ \\
PriDyG    & 0.742 & 0.915 & 0.734 & 0.872 & 6.0 & $10^{-5}$ \\
\bottomrule
\end{tabular}
\end{table}

\noindent\textbf{Dynamic Evaluation: Edge Insertion.}
\begin{table}[t]
\caption{Node classification accuracy under edge insertion.}
\label{tab:edge_insertion_nc}
\centering
\small
\resizebox{1\columnwidth}{!}{%
\begin{tabular}{lccccccccc}
\toprule
& \multicolumn{2}{c}{\textbf{Cora}} & \multicolumn{2}{c}{\textbf{PubMed}} & \multicolumn{2}{c}{\textbf{Arxiv}} & \multicolumn{2}{c}{\textbf{Products}} \\
\cmidrule(lr){2-3} \cmidrule(lr){4-5} \cmidrule(lr){6-7} \cmidrule(lr){8-9}
\textbf{Method} & Acc & $\epsilon$ & Acc & $\epsilon$ & Acc & $\epsilon$ & Acc & $\epsilon$ \\
\midrule
ProGAP + Without retrain & 0.709 & 6 & 0.908 & 6 & 0.726 & 6 & 0.860 & 6  \\

ProGAP + Trivial retrain             & 0.722 & 168  & 0.917 & 924   & 0.759 & 12,372  & 0.875 & 18,192 \\

ProGAP + Geometric Decay & 0.703 & 6 & 0.878 & 6 & 0.708 & 6 & 0.856 & 6\\

LLM-only        & 0.621 & 0    & 0.900 & 0     & 0.543 & 0      & 0.614 & 0 \\

PriDyG w/o LLM & 0.719 & 6 & 0.914 & 6 & 0.740 & 6 &0.873 & 6 \\

PriDyG   & \textbf{0.738} & 6  & \textbf{0.918} & 6   & \textbf{0.760} & 6    & \textbf{0.881} & 6 \\
\bottomrule
\end{tabular}
}
\end{table}
Table~\ref{tab:edge_insertion_nc} reports results under the edge
insertion protocol: edges arrive sequentially, and at each insertion
we query their endpoints that belong to the test set. 
PriDyG achieves the highest accuracy among all constant-budget methods, and matches naive retraining, which consumes up to three orders of magnitude more privacy budget. Most of this gain comes from the GNN branch (PriDyG w/o LLM), yet the LLM branch---despite its lower standalone accuracy---still contributes a consistent improvement, since the two branches are accurate on complementary label sets.
Among the constant-budget
alternatives, geometric decay performs worst---falling below even
ProGAP without retraining, i.e., a model frozen at $\mathcal{G}^{(0)}$
that ignores all subsequent edges. This ordering is intuitive:
geometric decay pays for freshness with exponentially growing noise, which ultimately destroys the graph signal and negates the benefit of incorporating the new edges. In contrast, the frozen model at least retains a clean snapshot.

Accuracies here exceed those in Table~\ref{tab:static_nc} because the query sets differ: the streaming protocol queries only endpoints of inserted edges, which skew
toward high-degree nodes, whereas the static setting also includes
isolated and low-degree test nodes. This selection effect inflates
all methods uniformly, so comparisons should be read within each
table; Table~\ref{tab:full-retrain-nc}, which evaluates the full test set
after the stream ends, is the like-for-like counterpart to
Table~\ref{tab:static_nc}.

\begin{table}[ht]
    \centering
    \caption{Node classification accuracy after edge insertion.}
    \scriptsize
    \begin{tabular}{lccccc}
    \toprule
         \textbf{Method} &  \textbf{Cora} & \textbf{PubMed} & \textbf{Arxiv} & \textbf{Product}\\
         \midrule
        PriDyG w/o LLM & 0.703 & 0.896& 0.705 & 0.842\\
        PriDyG & 0.731 & 0.915 & 0.721 & 0.851 \\
    \bottomrule
    \end{tabular}
    \label{tab:full-retrain-nc}
\end{table}

\noindent\textbf{Long-term effectiveness.} Table~\ref{tab:full-retrain-nc} reports the accuracy of PriDyG on the full test set after all the edge insertions. 
PriDyG loses only $1\%$ on average relative to static training on the same graph, against around $2\%$ for its GNN branch alone, indicating that the LLM-GNN collaboration remains effective under long-term updates. 

\begin{table}[ht]
    \centering
    \caption{Effect of privacy budgets $\epsilon$ on node classification.}
    \label{tab:different-epsilon-nc}
    \scriptsize
    \begin{tabular}{lcccccc}
        \toprule
        \multirow{2}{*}{Dataset}
        & \multicolumn{3}{c}{ProGAP}
        & \multicolumn{3}{c}{PriDyG} \\
        \cmidrule(lr){2-4}
        \cmidrule(lr){5-7}
        & $\epsilon=1$ & $\epsilon=6$ & $\epsilon=10$
        & $\epsilon=1$ & $\epsilon=6$ & $\epsilon=10$ \\
        \midrule
        Cora
        & 0.712 & 0.725 & 0.723
        & 0.731 & 0.742 & 0.739 \\
        PubMed
        & 0.904 & 0.914 & 0.915
        & 0.907 & 0.915 &0.927 \\
        Arxiv
        & 0.711 & 0.729 & 0.727
        & 0.730 &  0.734 & 0.745 \\
        Products
        & 0.847 & 0.859 & 0.859
        & 0.853 & 0.872 & 0.874 \\
        \bottomrule
    \end{tabular}
\end{table}

\noindent\textbf{Effect of the privacy budget.} 
Table \ref{tab:different-epsilon-nc} shows the accuracy of PriDyG under different $\epsilon$ for static node classification. The accuracy increases noticeably from $\epsilon=1$ to $\epsilon=6$, while the increase from $\epsilon=6$ to $\epsilon=10$ is small, suggesting that the high $ \epsilon$ has reached the saturation point. PriDyG consistently outperforms or matches ProGAP across privacy budgets.

\subsection{Link Prediction Results}

\noindent\textbf{Static Evaluation. }
Table~\ref{tab:static_lp} reports link prediction performance in the static setting. We use AUC as the evaluation metric, i.e., the model's ability to separate true edges from randomly sampled negative (non-existent) edges by predicted scores. 
PriDyG achieves the highest AUC on all four datasets, outperforming both the graph-only cosine baseline and LLM-only scoring. 
The LLM supplies semantic similarity that compensates for DP noise in the embeddings at zero additional privacy cost. DPLP does not complete training on Arxiv or Products within 72 hours, reported as `N/A'.

\begin{table}[t]
\caption{Static link prediction AUC ($\epsilon = 6.0$, $\delta = 10^{-5}$).}
\label{tab:static_lp}
\centering
\scriptsize
\begin{tabular}{lcccc}
\toprule
\textbf{Method} & \textbf{Cora} & \textbf{PubMed} & \textbf{Arxiv} & \textbf{Products} \\
\midrule
Cosine      & 0.888 & 0.853 & 0.961 & 0.976 \\
DPLP  & 0.910 & 0.949 & N/A & N/A \\
LLM-only  & 0.943 & 0.965 & 0.983 & 0.975 \\
PriDyG       & \textbf{0.947} & \textbf{0.977} & \textbf{0.989} & \textbf{0.986} \\
\bottomrule
\end{tabular}
\end{table}

\noindent\textbf{Dynamic Evaluation: Edge Insertion.}
Table~\ref{tab:edge_insertion_lp} compares link prediction AUC under edge insertion.  PriDyG achieves the best AUC with the smallest budget cost across all four datasets.

\begin{table}[t]
\caption{Link prediction AUC under edge insertion.}
\label{tab:edge_insertion_lp}
\centering
\small
\resizebox{1\columnwidth}{!}{%
\begin{tabular}{lccccccccc}
\toprule
& \multicolumn{2}{c}{\textbf{Cora}} & \multicolumn{2}{c}{\textbf{PubMed}} & \multicolumn{2}{c}{\textbf{Arxiv}} & \multicolumn{2}{c}{\textbf{Products}} \\
\cmidrule(lr){2-3} \cmidrule(lr){4-5} \cmidrule(lr){6-7} \cmidrule(lr){8-9}
\textbf{Method} & AUC & $\epsilon$ & AUC & $\epsilon$ & AUC & $\epsilon$ & AUC & $\epsilon$ \\
\midrule
ProGAP + Without retrain & 0.840 & 6 & 0.783 & 6 & 0.910 & 6 & 0.921 & 6 \\

ProGAP + Trivial retrain  & 0.852 & 150 & 0.787 & 732 & 0.923 & 9,828 & 0.939 & 14,286 \\
ProGAP + Geometrically Decaying & 0.824 & 6& 0.774 & 6 & 0.893 & 6 & 0.902 & 6\\
LLM-only & 0.916  & 0    & 0.961 & 0     & 0.975  & 0  & 0.972  & 0 \\
PriDyG w/o LLM & 0.852 & 6 & 0.786 & 6 & 0.924 & 6 & 0.936 & 6\\
PriDyG  & \textbf{0.929} & 6 & \textbf{0.967} & 6 & \textbf{0.983} & 6 & \textbf{0.978} & 6\\
\bottomrule
\end{tabular}
}
\end{table}

\noindent\textbf{Long-term effectiveness.} Table \ref{tab:full-retrain-lp} reports the accuracy of PriDyG on the test set after all the edge insertions. As in the node classification task, compared with fully training, PriDyG's accuracy drops by only roughly $1\%$, while the GNN part drops by at most more than $5\%$, demonstrating PriDyG's effectiveness under long-term updates. 

\begin{table}[ht]
    \centering
    \caption{Link prediction AUC after edge insertion.}
    \scriptsize
    \begin{tabular}{lccccc}
    \toprule
         \textbf{Method} &  \textbf{Cora} & \textbf{PubMed} & \textbf{Arxiv} & \textbf{Product}\\
         \midrule
          PriDyG w/o LLM & 0.855 & 0.793& 0.927 & 0.934\\
        PriDyG & 0.938 & 0.964 & 0.977 & 0.980 \\
    \bottomrule
    \end{tabular}
    \label{tab:full-retrain-lp}
\end{table}

\begin{table}[ht]
    \centering
    \caption{Effect of the privacy budgets $\epsilon$ on link prediction.}
    \label{tab:different-epsilon-lp}
    \scriptsize
    \begin{tabular}{lcccccc}
        \toprule
        \multirow{2}{*}{Dataset}
        & \multicolumn{3}{c}{Cosine}
        & \multicolumn{3}{c}{PriDyG} \\
        \cmidrule(lr){2-4}
        \cmidrule(lr){5-7}
        & $\epsilon=1$ & $\epsilon=6$ & $\epsilon=10$
        & $\epsilon=1$ & $\epsilon=6$ & $\epsilon=10$ \\
        \midrule
        Cora
        & 0.860 & 0.888 & 0.897
        & 0.935 & 0.947 & 0.948 \\
        PubMed
        & 0.808 & 0.853 & 0.855
        & 0.954 & 0.977 & 0.969 \\
        Arxiv
        & 0.954 & 0.961 & 0.959
        & 0.988 & 0.989 & 0.989 \\
        Products
        & 0.967 & 0.976 & 0.977
        & 0.983 & 0.986 & 0.986 \\
        \bottomrule
    \end{tabular}
\end{table}

\noindent\textbf{Effect of the privacy budget.} In Table \ref{tab:different-epsilon-lp}, we show the accuracy of PriDyG under different $\epsilon$ for static link prediction. The conclusion is similar to that of Table \ref{tab:different-epsilon-nc}: the AUC increases noticeably from $\epsilon=1$ to $\epsilon = 6$. The AUC difference between $\epsilon =6$ and $\epsilon=10$ is quite small.  PriDyG also consistently outperforms ProGAP across privacy budgets, with a significant margin in most cases.

\section{Related Works}

\partitle{Differentially Private Graph Learning}
Differential privacy~\cite{dwork2014algorithmic} is well established for deep learning~\cite{abadi2016deep} and increasingly applied to graph-structured data~\cite{mueller2022sok}. Under edge-level DP, existing works perturb node features before aggregation~\cite{sajadmanesh2021locally}, decouple graph convolutions~\cite{zhang2024dpar,chien2023differentially}, or protect edges specifically for link prediction~\cite{ran2024differentially,qi2024linkguard}. Most relevant to us, GAP~\cite{sajadmanesh2023gap} adds calibrated Gaussian noise to multi-hop aggregation, and ProGAP~\cite{sajadmanesh2024progap} extends it with progressive training. All of them, however, assume static graphs and a single round of training. When the graph evolves over time, and the model must be periodically retrained, sequential composition causes the privacy budget to grow linearly with the number of retrains. We close this gap with incremental PMA, which exploits parallel composition over disjoint edge batches to achieve constant total privacy cost.

\partitle{Privacy-Preserving Dynamic Graph Analysis}
Prior work on dynamic graphs under edge-level DP targets traditional graph queries such as graph density~\cite{paul2023edge,raskhodnikova2025fully}, or graph \textit{publishing}---e.g., DyGAN-EDP~\cite{ma2026novel} models a dynamic graph as discrete snapshots and synthesizes privacy-preserving future ones with a GAN. To the best of our knowledge, PriDyG is the first framework for edge-level DP inference over continually updated graphs at a constant privacy budget.

\partitle{LLMs for Graph Learning}
LLMs have shown strong potential for graph tasks by leveraging node-level textual attributes~\cite{chen2024exploring,wang2024llms}, through graph-to-token mappings~\cite{chen2024llaga}, graph instruction tuning~\cite{tang2024graphgpt}, or structured prompting for in-context learning~\cite{sun2025graphicl}. None of them consider privacy constraints. PriDyG instead exploits the fact that LLM inference relies solely on node text, which edge-level DP does not protect: the LLM branch incurs zero privacy cost, making it an ideal complement to the DP-constrained GNN.

\section{Conclusion}

We formulated the Edge-level DP Dynamic Graph Inference (EDG) problem and proposed PriDyG, which combines GNN-based structural learning with LLM-based semantic reasoning under edge-level DP. 
PriDyG introduces two contributions: (1) incremental PMA, which exploits the linearity of graph aggregation and disjoint edge batching to keep the total privacy budget independent of the number of model updates, and (2) privacy-free LLM augmentation, which uses node text at zero additional privacy cost to complement the DP-constrained GNN through confidence-gated fusion. 
Experiments on four benchmarks show that PriDyG attains utility with non-private baselines on both node classification and link prediction, while reducing cumulative privacy cost by orders of magnitude over naive retraining. Future work includes extending PriDyG to node-level privacy and to graph streams with edge deletions.

\bibliographystyle{ACM-Reference-Format}
\bibliography{reference} 

\appendix
\section{Proofs}
\label{appendix:proofs}

\begin{proof}[\textbf{Proof for Lemma \ref{lem:rr}}]
Neighboring graphs differ in one undirected edge
$e^\star=\{u^\star,v^\star\}$. Each pair's bit is perturbed by
Eq.~\eqref{eq:rr} at most once, and the likelihood ratio of the
two-point RR channel is bounded by $e^{\varepsilon_{\mathrm{RR}}}$,
giving pure $\varepsilon_{\mathrm{RR}}$-DP for the calibration
release. For the joint statement: since $h$ is graph-independent,
$e^\star$ lies in exactly one of $\mathcal{P}_{\mathrm{train}}$,
$\mathcal{P}_{\mathrm{fus}}$. If
$e^\star\in\mathcal{P}_{\mathrm{train}}$, the calibration bits are
identically distributed under both graphs and the PMA transcript is
$(\varepsilon_{\mathrm{PMA}},\delta)$-DP by Theorem~\ref{thm:constant-budget}; if
$e^\star\in\mathcal{P}_{\mathrm{fus}}$, the PMA transcript is
identically distributed (the pair is excluded from all adjacency
matrices) and the calibration release is
$(\varepsilon_{\mathrm{RR}},0)$-DP. Parallel composition over the
disjoint partition yields the maximum of the two guarantees.
\end{proof}

\begin{proof}[\textbf{Proof for Theorem \ref{thm:constant-budget}}]
Since each edge enters the buffer exactly once and is drained exactly
once, the batches $\{B_m\}_{m=0}^{M}$ are disjoint by construction.
Consider two neighboring dynamic graphs $D$ and $D'$ that differ in one
undirected edge $e^\star=\{u^\star,v^\star\}$, and let $m^\star$ be the
unique retrain with $e^\star\in B_{m^\star}$. We decompose the
transcript into three segments and compare their conditional
distributions under $D$ and $D'$.
 
\emph{Before $m^\star$.} For $m<m^\star$, the computation accesses only
batches disjoint from $\{e^\star\}$, so $(Y_0,\dots,Y_{m^\star-1})$ has
identical distribution under $D$ and $D'$.
 
\emph{At $m^\star$.} All features consumed at retrain $m^\star$---the
cached aggregations carried over from previous retrains, and the
lower-hop features $\bar{\mathbf{Z}}^{(k-1)}_{\mathrm{new}}$ finalized
earlier within the same retrain---are deterministic functions of the
transcript released so far. Conditional on that transcript, the only
fresh access to $B_{m^\star}$ at hop $k$ is the multiplication by
$\Delta\mathbf{A}^{(m^\star)}$ in Eq.~\eqref{eq:delta-agg}, whose
$\ell_2$-sensitivity to one undirected edge is $\sqrt{2}$: removing
$\{u^\star,v^\star\}$ changes the rows of $u^\star$ and $v^\star$, each
by at most one unit-norm vector. Hence $Y_{m^\star}$, conditional on
the prefix, is a sequence of $K$ adaptively composed Gaussian
mechanisms with sensitivity $\sqrt{2}$ and scale $\sigma$---exactly the
mechanism structure of one static $K$-hop PMA invocation with respect
to the edge $e^\star$.
 
\emph{After $m^\star$.} For $m>m^\star$, the raw edge $e^\star$ is
never re-accessed: later retrains touch only batches disjoint from
$\{e^\star\}$ and otherwise depend on $e^\star$ solely through the
already-released transcript. These segments are therefore (randomized)
post-processing with respect to $e^\star$ and incur no additional
privacy loss.
 
Consequently, for any pair of neighboring graphs, the privacy loss of
the full transcript is exactly that of a single static $K$-hop PMA
invocation with noise scale $\sigma$. Any
$(\epsilon_{\mathrm{PMA}},\delta)$-DP guarantee certified for static
PMA therefore transfers verbatim to $Y_{0:M}$, for every $M$ and every
retraining schedule.
\end{proof}

\begin{proof}[\textbf{Proof of Theorem~\ref{thm:captured-paths}}]
Let $\mathbf{I}^{(k)}_m$ denote the noiseless hop-$k$ cache after
retrain $m$. By Steps~(2)--(3) of the algorithm with the noise
omitted, the caches obey
\begin{equation}
\mathbf{I}^{(k)}_m
=\mathbf{I}^{(k)}_{m-1}+\mathbf{A}_m\,\mathbf{I}^{(k-1)}_m,
\qquad
\mathbf{I}^{(0)}_m=\mathbf{Z}\ \ \text{for all }m,
\label{eq:noiseless-recursion}
\end{equation}
since $\bar{\mathbf{Z}}^{(0)}$ depends only on node text, and hop $k$
multiplies the delta adjacency by the \emph{finalized} hop-$(k{-}1)$
features of the current retrain.
 
We prove \eqref{eq:captured-paths} by induction on $(m,k)$ in
lexicographic order. For $m=0$, the initial retrain runs full PMA on
$B_0$, so $\mathbf{I}^{(k)}_0=\mathbf{A}_0^{\,k}\mathbf{Z}$, the
unique (trivially non-decreasing) length-$k$ word over $\{0\}$. For
$k=0$, both sides equal $\mathbf{Z}$ (the empty path). For the
inductive step, assume the claim for $(m-1,k)$ and $(m,k-1)$. In
\eqref{eq:noiseless-recursion}, the first term is the sum over all
non-decreasing paths with indices in $\{0,\dots,m-1\}$, i.e., those
with $i_k\le m-1$; the second term equals
\[
\mathbf{A}_m\!\!\sum_{0\le i_1\le\cdots\le i_{k-1}\le m}\!\!
\mathbf{A}_{i_{k-1}}\cdots\mathbf{A}_{i_1}\mathbf{Z}
\;=\!\!\sum_{\substack{0\le i_1\le\cdots\le i_k\le m\\ i_k=m}}\!\!
\mathbf{A}_{i_k}\cdots\mathbf{A}_{i_1}\mathbf{Z},
\]
exactly the non-decreasing paths with $i_k=m$ (the constraint
$i_{k-1}\le m$ is automatic). The two index sets are disjoint and
their union is the set of all non-decreasing paths over
$\{0,\dots,m\}$, proving \eqref{eq:captured-paths}. Expanding
$\mathbf{F}^{(k)}=\bigl(\sum_{i\le m}\mathbf{A}_i\bigr)^{k}\mathbf{Z}$
yields all length-$k$ paths.
 
For a single retrain
($\mathbf{A}_{\mathrm{new}}=\mathbf{A}_{\mathrm{old}}+\Delta\mathbf{A}$,
i.e., $m=1$ with $\mathbf{A}_0=\mathbf{A}_{\mathrm{old}}$ and
$\mathbf{A}_1=\Delta\mathbf{A}$), \eqref{eq:captured-paths}
specializes to
\begin{equation}
\mathbf{I}^{(k)}
=\sum_{i=0}^{k}\Delta\mathbf{A}^{\,i}\,
\mathbf{A}_{\mathrm{old}}^{\,k-i}\,\mathbf{Z},
\label{eq:inc-unrolled}
\end{equation}
the paths in which every old-edge step precedes every new-edge step.
Writing $\mathbf{C}^{(k)}=\mathbf{A}_{\mathrm{old}}^{\,k}\mathbf{Z}$
for the cached aggregation, the error
$\mathbf{D}^{(k)}=\mathbf{F}^{(k)}-\mathbf{I}^{(k)}$ collects all
remaining paths---those containing at least one old-edge step
\emph{after} a new-edge step---and obeys the recursion
\begin{equation}
\mathbf{D}^{(k)}
=\mathbf{A}_{\mathrm{old}}\bigl(\mathbf{F}^{(k-1)}-\mathbf{C}^{(k-1)}\bigr)
+\Delta\mathbf{A}\,\mathbf{D}^{(k-1)},
\qquad \mathbf{D}^{(1)}=\mathbf{0}.
\label{eq:err-recursion}
\end{equation}
The first term of \eqref{eq:err-recursion} exposes the source of the
error: it is $\mathbf{A}_{\mathrm{old}}$ propagating the staleness of
the cache---new edges always aggregate up-to-date features, but old
edges, frozen in the cache, never re-aggregate features that were
refreshed by later insertions.
\end{proof}
 
\begin{proof}[\textbf{Proof of Corollary~\ref{cor:hop1-exact}}]
For $k=1$, every length-$1$ word is trivially non-decreasing, so by
Theorem~\ref{thm:captured-paths},
$\mathbf{I}^{(1)}=\sum_{b\le m}\mathbf{A}_b\,\mathbf{Z}
=\mathbf{A}_{\mathrm{new}}\mathbf{Z}=\mathbf{F}^{(1)}$.
\end{proof}
 
\begin{proof}[\textbf{Proof of Corollary~\ref{cor:hop2-capture}}]
A wedge with edges $e_1,e_2$ carries two transfers, corresponding to
the paths $\bigl(e_1,e_2\bigr)$ and $\bigl(e_2,e_1\bigr)$.
If the two edges were inserted in the same batch, or both lie in $\mathcal{E}^{(0)}$---then both
paths are non-decreasing, and both transfers are captured by
Theorem~\ref{thm:captured-paths}. If the two edges were inserted in different batches, then one edge
is necessarily older than the other and exactly one path (older
edge first) is non-decreasing: $50\%$. Hence the captured fraction is
$(1-\rho)\cdot 1+\rho\cdot\tfrac12=1-\tfrac12\rho\ge\tfrac12$. For a
single retrain, this recovers
$\mathbf{D}^{(2)}=\mathbf{A}_{\mathrm{old}}\,\Delta\mathbf{A}\,\mathbf{Z}$
from \eqref{eq:err-recursion}: the traversal crossing the old edge
first is captured
($\Delta\mathbf{A}\,\mathbf{A}_{\mathrm{old}}\mathbf{Z}$ appears in
\eqref{eq:inc-unrolled}), while the traversal crossing the new edge
first is lost.
\end{proof}
 
\begin{proof}[\textbf{Proof of Proposition~\ref{prop:noise-accum}}]
The initial retrain injects one draw of
$\mathcal{N}(\mathbf{0},\sigma^2\mathbf{I})$ per hop (Eq.~(4)), and
each of the $M$ subsequent retrains injects one fresh draw per hop
via Eq.~(8), accumulated additively into the cache by Eq.~(9). The
$M{+}1$ draws are mutually independent, so their variances sum to
$(M{+}1)\sigma^2$. By Theorem~\ref{thm:constant-budget}, parallel
composition over disjoint batches lets every incremental retrain use
the full per-invocation noise scale $\sigma$; a one-shot full PMA on
the final graph at the same
$(\epsilon_{\mathrm{PMA}},\delta)$ calibrates to the \emph{same}
$\sigma$ but perturbs each hop with a single draw, giving the
$\sqrt{M{+}1}$ inflation in noise magnitude at identical privacy
cost. At hop $1$, the noiseless aggregations coincide by
Corollary~\ref{cor:hop1-exact}, so the deviation from the one-shot
release is exactly this injected noise. For the geometrically
decaying baseline, the Gaussian noise scale required for budget
$\epsilon_m=(1-q)q^m\epsilon_{\mathrm{tot}}$ at fixed $\delta$
satisfies $\sigma_m\propto 1/\epsilon_m
=q^{-m}/\bigl((1-q)\epsilon_{\mathrm{tot}}\bigr)
=\Theta\!\bigl(q^{-m}\bigr)$; and because that baseline re-runs full
PMA, the entire representation released at retrain $m$ is perturbed
at scale $\sigma_m$, so utility collapses exponentially fast in the
number of retrains.
\end{proof}

\section{LLM Prompt Templates}
\label{appendix:prompts}

This appendix documents all LLM prompt templates used in our experiments. We use zero-shot prompting with logit-based scoring: the LLM computes log-probabilities over candidate answer tokens (e.g., category names or ``Yes''/``No''), and the candidate with the highest probability is selected. All prompts follow a consistent structure tailored to each task and dataset.

\subsection{Node Classification Prompts}

\begin{tcolorbox}[colback=gray!5, colframe=gray!50, title=Cora Node Classification Prompt]
\small
\begin{verbatim}
You are a scientific paper classification assistant.
Choose the single best paper category for this paper.
Paper title:
{title}

Paper abstract:
{abstract}

Choose the single best paper category for this paper
from the candidate list.
Candidates:
 Case Based: Papers on case-based reasoning,
   instance-based learning, and analogy-driven
   problem solving.
 Genetic Algorithms: Papers on evolutionary
   computation, genetic programming, and optimization
   via natural selection.
 Neural Networks: Papers on artificial neural
   networks, deep learning, backpropagation, and
   connectionist models.
 Probabilistic Methods: Papers on Bayesian methods,
   graphical models, expectation-maximization, and
   statistical inference.
 Reinforcement Learning: Papers on reward-based
   learning, policy optimization, Markov decision
   processes, and Q-learning.
 Rule Learning: Papers on inductive logic programming,
   decision rules, association rules, and symbolic
   rule extraction.
 Theory: Papers on computational learning theory,
   PAC learning, VC dimension, and formal analysis
   of algorithms.

Output ONLY the category name from the candidate list.
Do not output anything else.
Answer:
\end{verbatim}
\end{tcolorbox}

\begin{tcolorbox}[colback=gray!5, colframe=gray!50, title=ogbn-arxiv Node Classification Prompt]
\small
\begin{verbatim}
Classify this computer science paper into exactly one
arXiv category.

Title: {title}
Abstract: {abstract}

Categories:
 0: cs.AI (Artificial Intelligence)
 1: cs.AR (Hardware Architecture)
 2: cs.CC (Computational Complexity)
 ...
 39: cs.SY (Systems and Control)

Output ONLY the integer label id from the candidate
list. Do not output anything else.
Answer:
\end{verbatim}
\end{tcolorbox}

The full list of 40 arXiv CS subcategories used as candidates is shown in Table~\ref{tab:arxiv-categories}.

\begin{table}[h]
\centering
\caption{arXiv CS subcategories used as candidate labels.}
\label{tab:arxiv-categories}
\small
\resizebox{\columnwidth}{!}{%
\begin{tabular}{ll|ll}
\toprule
\textbf{Code} & \textbf{Description} & \textbf{Code} & \textbf{Description} \\
\midrule
cs.AI & Artificial Intelligence & cs.IT & Information Theory \\
cs.AR & Hardware Architecture & cs.LG & Machine Learning \\
cs.CC & Computational Complexity & cs.LO & Logic in Computer Science \\
cs.CE & Comp.\ Eng., Finance, \& Sci. & cs.MA & Multiagent Systems \\
cs.CG & Computational Geometry & cs.MM & Multimedia \\
cs.CL & Computation and Language & cs.MS & Mathematical Software \\
cs.CR & Cryptography and Security & cs.NA & Numerical Analysis \\
cs.CV & Computer Vision \& Pattern Rec. & cs.NE & Neural \& Evolutionary Comp. \\
cs.CY & Computers and Society & cs.NI & Networking \& Internet Arch. \\
cs.DB & Databases & cs.OH & Other Computer Science \\
cs.DC & Distributed \& Parallel Comp. & cs.OS & Operating Systems \\
cs.DL & Digital Libraries & cs.PF & Performance \\
cs.DM & Discrete Mathematics & cs.PL & Programming Languages \\
cs.DS & Data Structures \& Algorithms & cs.RO & Robotics \\
cs.ET & Emerging Technologies & cs.SC & Symbolic Computation \\
cs.FL & Formal Lang.\ \& Automata Th. & cs.SD & Sound \\
cs.GL & General Literature & cs.SE & Software Engineering \\
cs.GR & Graphics & cs.SI & Social \& Info.\ Networks \\
cs.GT & CS and Game Theory & cs.SY & Systems and Control \\
cs.HC & Human-Computer Interaction & & \\
cs.IR & Information Retrieval & & \\
\bottomrule
\end{tabular}
}
\end{table}

\begin{tcolorbox}[colback=gray!5, colframe=gray!50, title=PubMed Node Classification Prompt]
\small
\begin{verbatim}
You are a biomedical paper classification assistant.
Choose the single best diabetes category for this paper.
Paper title:
{title}

Paper abstract:
{abstract}

Choose the single best diabetes category for this paper
from the candidate list.
Candidates:
 Diabetes Mellitus, Experimental
 Diabetes Mellitus, Type 1
 Diabetes Mellitus, Type 2

Output ONLY the category name from the candidate list.
Do not output anything else.
Answer:
\end{verbatim}
\end{tcolorbox}

\begin{tcolorbox}[colback=gray!5, colframe=gray!50, title=ogbn-products Node Classification Prompt]
\small
\begin{verbatim}
You are a product classification assistant.
Choose the single best top-level product category for
this Amazon product from the candidate list.
Product title:
{title}
Product description:
{content}
Choose the single best top-level product category for
the given product.
Candidates:
0: Home & Kitchen
1: Health & Personal Care
2: Beauty
...

Output ONLY the integer label id from the candidate
list. Do not output anything else.
Answer:
\end{verbatim}
\end{tcolorbox}

\subsection{Link Prediction Prompts}

For link prediction, we use a unified prompt template parameterized by dataset-specific context. The LLM scores the log-probabilities of ``Yes'' and ``No'' tokens, and we use $P(\text{Yes})$ as the link likelihood score.

\begin{tcolorbox}[colback=gray!5, colframe=gray!50, title=Link Prediction Prompt -- Citation Networks]
\small
\begin{verbatim}
You are an expert in analyzing {context}.
Given the descriptions of two papers below, determine
whether they are likely to cite each other (i.e., one
paper references the other).

Paper A:
{description of paper A}

Paper B:
{description of paper B}

Based on the content similarity and topical relevance,
are these two papers likely to cite each other (i.e.,
one paper references the other)?
Output ONLY 'Yes' or 'No'.
Answer:
\end{verbatim}
\end{tcolorbox}

The \texttt{\{context\}} field is set to:
\begin{itemize}
    \item \textbf{Cora}: ``academic papers in computer science (Cora citation network)''
    \item \textbf{ogbn-arxiv}: ``academic papers from arXiv (OGB-arxiv citation network)''
    \item \textbf{PubMed}: ``biomedical research papers from PubMed (citation network)''
\end{itemize}

\begin{tcolorbox}[colback=gray!5, colframe=gray!50, title=Link Prediction Prompt -- Co-purchasing Network]
\small

\begin{verbatim}
You are an expert in analyzing products from Amazon
(co-purchasing network).
Given the descriptions of two products below, determine
whether they are likely to are frequently co-purchased
together.

Product A:
{description of product A}

Product B:
{description of product B}

Based on the content similarity and topical relevance,
are these two products likely to are frequently
co-purchased together?
Output ONLY 'Yes' or 'No'.
Answer:
\end{verbatim}
\end{tcolorbox}

\end{document}